\definecolor{dkviolet}{RGB}{102, 0, 153}
\definecolor{ltblue}{RGB}{0, 0, 0}
\definecolor{dkblue}{RGB}{0, 0, 139}
\definecolor{dkgreen}{RGB}{0, 100, 0}
\lstdefinelanguage{Coq}{ 
    mathescape=true,
    texcl=false, 
    morekeywords=[1]{Section, Module, End, Require, Import, Export,
        Variable, Variables, Parameter, Parameters, Axiom, Hypothesis,
        Hypotheses, Notation, Local, Tactic, Reserved, Scope, Open, Close,
        Bind, Delimit, Definition, Let, Ltac, Fixpoint, CoFixpoint, Add,
        Morphism, Relation, Implicit, Arguments, Unset, Contextual,
        Strict, Prenex, Implicits, Inductive, CoInductive, Record,
        Structure, Canonical, Coercion, Context, Class, Global, Instance,
        Program, Infix, Theorem, Lemma, Corollary, Proposition, Fact,
        Remark, Example, Proof, Goal, Save, Qed, Defined, Hint, Resolve,
        Rewrite, View, Search, Show, Print, Printing, All, Eval, Check,
        Projections, inside, outside, Def},
    morekeywords=[2]{forall, exists, exists2, fun, fix, cofix, struct,
        match, with, end, as, in, return, let, if, is, then, else, for, of,
        nosimpl, when},
    morekeywords=[3]{Type, Prop, Set, true, false, option},
    morekeywords=[4]{pose, set, move, case, elim, apply, clear, hnf,
        intro, intros, generalize, rename, pattern, after, destruct,
        induction, using, refine, inversion, injection, rewrite, congr,
        unlock, compute, ring, field, fourier, replace, fold, unfold,
        change, cutrewrite, simpl, have, suff, wlog, suffices, without,
        loss, nat_norm, assert, cut, trivial, revert, bool_congr, nat_congr,
        symmetry, transitivity, auto, split, left, right, autorewrite},
    morekeywords=[5]{by, done, exact, reflexivity, tauto, romega, omega,
        assumption, solve, contradiction, discriminate},
    morekeywords=[6]{do, last, first, try, idtac, repeat},
    morecomment=[s]{(*}{*)},
    showstringspaces=false,
    morestring=[b]",
    morestring=[d]’,
    tabsize=3,
    extendedchars=false,
    sensitive=true,
    breaklines=false,
    basicstyle=\small,
    captionpos=b,
    columns=[l]flexible,
    identifierstyle={\ttfamily\color{black}},
    keywordstyle=[1]{\ttfamily\color{dkviolet}},
    keywordstyle=[2]{\ttfamily\color{dkgreen}},
    keywordstyle=[3]{\ttfamily\color{ltblue}},
    keywordstyle=[4]{\ttfamily\color{dkblue}},
    keywordstyle=[5]{\ttfamily\color{dkred}},
    stringstyle=\ttfamily,
    commentstyle={\ttfamily\color{dkgreen}},
    literate=
    {\\forall}{{\color{dkgreen}{$\forall\;$}}}1
    {\\exists}{{$\exists\;$}}1
    {<-}{{$\leftarrow\;$}}1
    {=>}{{$\Rightarrow\;$}}1
    {==}{{\code{==}\;}}1
    {==>}{{\code{==>}\;}}1
    {->}{{$\rightarrow\;$}}1
    {<->}{{$\leftrightarrow\;$}}1
    {<==}{{$\leq\;$}}1
    {\\o}{{$\circ\;$}}1 
    {\@}{{$\cdot$}}1 
    {\/\\}{{$\wedge\;$}}1
    {\\\/}{{$\vee\;$}}1
    {++}{{\code{++}}}1
    {~}{{$\sim$}}1
    {\@\@}{{$@$}}1
    {\\mapsto}{{$\mapsto\;$}}1
    {\\hline}{{\rule{\linewidth}{0.5pt}}}1
}[keywords,comments,strings]
\newcommand{\id}{\textrm{id}}
\newcommand{\transform}[1]{\stackrel{#1}{\longrightarrow}}
\newcommand{\transformv}[1]{\stackrel{#1}{\longmapsto}}
\newcommand{\Yes}{\textrm{Yes}}
\newcommand{\No}{\textrm{No}}
\newcommand{\counit}{\varepsilon}
\newcommand{\comma}{\textrm{,}}
\newcommand{\putf}{\mathtt{put}}
\newcommand{\strength}{\mathtt{strength}}
\newcommand{\ctx}{\mathtt{ctx}}
\newcommand{\concat}{\mathtt{concat}}
\renewcommand{\prod}{\alpha}
\newcommand{\catname}[1]{{\normalfont\textbf{#1}}}
\newcommand{\Set}{\catname{Set}}
\newcommand{\smallpicc}[1]{\begin{figure}[H]\centering\includegraphics[width=0.3\textwidth]{#1.pdf} \end{figure}}
\newcommand{\custompicc}[2]{\begin{figure}[H]\centering\includegraphics[width=#2\textwidth]{#1.pdf} \end{figure}}
\newcommand{\PL}{\overrightarrow{L}}
\newcommand{\downmapsto}{\rotatebox[origin=c]{-90}{$\scriptstyle\mapsto$}\mkern2mu}
\theoremstyle{plain}
\newenvironment{replemma}[1]
{\innerreplemma}
{\endinnerreplemma}
\title{Monads, Comonads, and Transducers} 
\author{Rafa\l \ Stefa\'nski}{University of Warsaw, Poland}{rafal.stefanski@mimuw.edu.pl}{https://orcid.org/0000-0002-8439-4056}{}
\authorrunning{R. Stefanski} 
\keywords{Monad, Comonad, Transdcuer, Algebra, Composition, Wreath Product, Regular Languages, Regular Transductions, Mealy Machines, Rational Transductions} 
\begin{document}

\maketitle

\begin{abstract}
This paper proposes a definition of \emph{recognizable transducers over monads and comonads},
which bridges two important ongoing efforts in the current research on regularity. 
The first effort is the study of regular transductions, which extends the notion
of regularity from languages into word-to-word functions. The other important effort 
is generalizing the notion of regular languages from words to arbitrary monads, 
introduced in \cite{bojanczyk2015recognisable} and further developed
in \cite{bojanczyk2020languages, bojanczyk2023monadic}. In the paper,
we present a number of examples of transducer classes that fit the proposed framework.
In particular we show that our class generalizes the classes of Mealy machines and rational transductions. 
We also present examples of recognizable transducers for infinite words and a specific type of trees called terms. 
The main result of this paper is a theorem, which states the class of
recognizable transductions is closed under composition, subject to some \emph{coherence} axioms 
between the structure of a monad and the structure of a comonad.
Due to its complexity, we formalize the proof of the theorem in \emph{Coq Proof Assistant} \cite{Coq-refman}.
In the proof, we introduce the concepts of a \emph{context} and a
\emph{generalized wreath product} for Eilenberg-Moore algebras,
which could be valuable tools for studying these algebras.
\end{abstract}

\section{Introduction}
The study of transductions plays an important role in studying and understanding the theory of regularity.
Although this idea is not new, and its importance has been known for decades
(see the first paragraph of \cite[Section~V]{scott1967some}),
it seems to have been gaining momentum in the recent years (e.g. see 
\cite{colcombet2023integer, dolatian2019learning, filiot2022rational, dartois2017reversible}). \todo{citations}
This paper aims to extend the concept of \emph{languages recognized over monads},
introduced by \cite{bojanczyk2015recognisable}, to transductions. 
Interestingly, this requires studying functors that are simultaneously equipped 
with the structures of both a monad and a comonad.

Although, this work is clearly inspired by category theory, our primary focus 
lies within the domain of formal languages and transducers. 
For this reason, we do not assume any prior knowledge of category theory 
on the part of the reader. We will provide all necessary definitions 
and limit our discussion to the basic category \Set, 
which consists of sets and functions between them. For a discussion 
about extending this work to other categories, see \cref{subsec:ccc}.  

The paper is structured as follows: In \cref{sec:monads-and-languages}, we summarize
the results on \emph{languages recognizable over monads} (based on \cite{bojanczyk2015recognisable,bojanczyk2020languages,bojanczyk2023monadic}), 
which serves as a context for this paper.
In \cref{sec:comonads-and-transdcuers}, we introduce \emph{transductions recognizable over monads and comonads}, which is the 
main contribution of this paper. In \cref{sec:compositions-of-recognisable-transductions}, we show that 
our proposed classes of transducers are closed under compositions, 
subject to certain \emph{coherence axioms}.  This serves two purposes:
First, it serves as a validation for our class of transductions. Second, 
it facilitates a deeper understanding of the structure of the monad/comonad functors and 
their Eilenberg-Moore algebras. 
Finally, in \cref{sec:futher-work}, we outline the potential directions for further work.

Let us also  mention that some of the proofs (including the proof of the compositions theorem)
are verified in the \emph{Coq Proof Assistant} \cite{Coq-refman} (see \cref{subsec:coq}).

\paragraph*{Related work}
A recently published work \cite{bojanczyk2023algebraic} also presents a
categorical framework for defining transductions over monads. 
Our paper differs from \cite{bojanczyk2023algebraic} in two important ways:
First, \cite{bojanczyk2023algebraic} focuses on generalizing the class of regular functions (i.e. those recognized by two-way transducers),
whereas this paper concentrates on generalizations of Mealy machines and rational functions (see also Item~1 in Section~\ref{sec:futher-work}).
Second, the two papers present different approaches to the problem.
In particular, our paper develops a comonadic framework that 
does not appear in \cite{bojanczyk2023algebraic}.
We believe that both the approaches are valuable and warrant further investigation.
Future research could explore potential connections and synergies between the two methodologies.

\section{Monads and recognizable languages}\label{sec:monads-and-languages}
In this section we present a brief summary of the existing research on \emph{recognizable languages over monads}, 
which is the starting point for this paper. This line of research was initiated in \cite{bojanczyk2015recognisable},
and then continued in \cite{bojanczyk2020languages} and in \cite{bojanczyk2023monadic}. The main idea is to 
approach regular languages from the algebraic perspective, and then use Eilenberg-Moore algebras to generalize their definition from languages over words to languages 
over arbitrary monads (such as infinite words, trees, or even graphs).
We start the summary with the following, well-established definition of monoid recognizability:
\begin{definition}
\label{def:monoid-recognisable}
    A monoid is a set equipped with a transitive binary operation and an identity element.
    We say that a language $L \subseteq \Sigma^*$ is \emph{recognizable by a monoid} if there exist: 
    \[
        \begin{tabular}{ccc}
            $\underbrace{M}_{\textrm{Monoid}}$ & 
            $\underbrace{h : \Sigma \to M}_{\textrm{Input function}}$ & 
            $\underbrace{F \subseteq M}_{\textrm{Accepting set}}$,
        \end{tabular}
    \]
    such that a word $w_1 \ldots w_n \in \Sigma^*$ belongs to $L$, if and only if:
    \[h(w_1) \cdot h(w_2) \cdot \ldots \cdot h(w_n) \in F \]
\end{definition}
It is a well-know fact, that the class of languages that can be recognized by \emph{finite} 
monoids is equal to regular languages (see \cite[Theorem 3.21]{pin2010mathematical} for details). 

As mentioned before, the key idea presented by \cite{bojanczyk2015recognisable} is extending 
\cref{def:monoid-recognisable} from languages of words, to languages of arbitrary monads. 
Before we show how to do this, we need to define \emph{monads}. 
However, since monads are a special kind of functors, we need to start by 
giving a definition of a \emph{functor}\footnote{We only define a special case endofunctors in $\Set$, as those are the
only type of functors that we are going to use. See \cite[Definition~41]{abramsky2011introduction} for a general definition.}:
\begin{definition}\label{def:functor}
    A \emph{functor} $M$ consists two parts. The first part is a mapping from sets to sets, 
    i.e. for every set $X$, the functor $M$ assigns another set denoted as $M X$. 
    The other part is a mapping from functions to functions, i.e. for every function $f : X \to Y$, 
    the functor $M$ assigns a function $M f : M X \to M Y$. Moreover, the function mapping needs to 
    satisfy the following axioms (where $\id_X$ is the identity function on $X$, and $\circ$ is function composition):
    \[
        \begin{tabular}{cc}
            $M \id_X = \id_{M X}$ & $M (f \circ g) = (M f) \circ (M g)$,
        \end{tabular} 
    \]
\end{definition}
For example,  let us show how to apply this definition to \emph{finite lists}:
\begin{example}\label{ex:finite-lists}
    The \emph{finite lists} functor maps every set $X$ into $X^*$ (i.e. the set of finite lists over $X$), 
    and it maps every function $f : X \to Y$ into a function $f^* : X^* \to Y^*$ that applies $f$ element
    by element. It is not hard to see that this satisfies the axioms from \cref{def:functor}. 
\end{example}
We are now ready to give the definition of a \emph{monad}:
\begin{definition}\label{def:monad}
    A \emph{monad} is a functor $M$ equipped with two operations;
    \[
        \begin{tabular}{ccc}
                $\eta_X : X \to M X$ & and & $\mu_X :  M M X \to X$  
        \end{tabular}
    \]
    We are going to refer to $\eta$ as the \emph{singleton} operation,
    and to $\mu$ as the \emph{flatten} operation. The two operations need
    to satisfy the axioms of a monad, which can 
    be found in Section~\ref{subsec:monad-axioms} of the appendix. 
\end{definition}
Let us now show how to apply this definition to \emph{finite lists}:
\begin{example}\label{ex:monad-list}
    The functor of \emph{finite lists} can be equipped with the following monad structure. 
    The singleton operation $\eta : X \to X^*$ returns a singleton list with the argument,
    and the flatten operation $\mu : M M X \to M X$ flattens the list of lists into a single list.
    For example:
    \[ 
        \begin{tabular}{ccc}
            $\eta(3) = [3]$ & and & $\mu([[1, 2, 3], [4, 5, 6], [], [7,8], [9]]) = [1, 2, 3, 4, 5, 6, 7, 8, 9]$
        \end{tabular}
    \]
\end{example} 
From the perspective of \cite{bojanczyk2015recognisable}, the most important feature 
of monads is that they can be used to define \emph{Eilenberg-Moore algebras}, 
which can be seen as a generalization of monoids:
\begin{definition}
\label{def:em-algabra}
    An \emph{Eilenberg-Moore algebra} for a monad $M$ is a set $S$ together 
    with a multiplication function $\prod : M S \to S$, that makes the
    following diagrams commute\footnote{
        We hope that the notation of commutative diagrams is intuitively clear.
        See \cite[Section~1.1.3]{abramsky2011introduction} for more explanation.
     }:
\[\begin{tikzcd}
	{M \, M \, S} &&& {M \, S} && S && {M\, S} \\
	\\
	{M \, S} &&& S &&&& S
	\arrow["{\mu_S}", from=1-1, to=3-1]
	\arrow["{M \, \prod}"', from=1-1, to=1-4]
	\arrow["\prod", from=3-1, to=3-4]
	\arrow["\prod"', from=1-4, to=3-4]
	\arrow["{\textrm{id}_S}"{description}, from=1-6, to=3-8]
	\arrow["{\eta_S}", from=1-6, to=1-8]
	\arrow["\prod", from=1-8, to=3-8]
\end{tikzcd}\]
\end{definition}

To understand the intuition behind this definition, let us show that there is a 
bijective correspondence between Eilenberg-Moore algebras for finite lists and monoids:
\begin{example}
\label{ex:monoids-as-em-algebras}
   Let $(X, \prod)$ be an Eilenberg-Moore algebra for the finite list monad. We can use $\prod$ to 
   define a binary operation and an identity element on $X$, obtaining a monoid structure
   on $X$:
   \[ \begin{tabular}{ccc}
    $ x \cdot y = \prod([x, y])$ & and & $1_X = \prod([\,])$,
   \end{tabular} \]
   To see that this is a valid monoid observe that:
   \[ x \cdot (y \cdot z) \stackrel{\textrm{def}}{=} \prod([x, \prod([y, z])]) \stackrel{\textrm{Ax.} \eta}{=} \prod([\prod([x]), \prod([y, z])])
   \stackrel{\textrm{Ax.} \mu}{=} \prod(\mu [[x], [y, z]]) = \prod([x, y, z]) \textrm{,} \]
   where Ax.$\mu$ and Ax.$\eta$ denote the Eilenberg-Moore axioms from \cref{def:em-algabra}.
   Using a similar reasoning one can show that $(x \cdot y) \cdot z = \prod([x, y, z])$, which 
   means that the new binary operation is associative. Similarly, one can show that $1_X$
   is indeed an identity element.

   To see that this defines a bijection between monoids and Eilenberg-Moore algebras for finite lists,
   we define an inverse mapping that defines $\alpha$ in terms of the monoid structure:
   \[ \begin{tabular}{ccc}
    $\prod([x_1, x_2, \ldots, x_n]) = x_1 \cdot x_2 \cdot \ldots \cdot x_n$ & and & $\prod([\, ]) = 1_X$ 
   \end{tabular}
   \]
\end{example}
We are now ready to present the key definition from \cite{bojanczyk2015recognisable},
i.e. a \emph{recognizable language for a monad $M$}:
\begin{definition}\label{def:recognisable-language}
Let $M$ be a monad. We define an $M$-language over an alphabet $\Sigma$ 
to be a subset of $M \Sigma$.  We say that a language $L
\subseteq M \Sigma$ is $M$-definable if there exist: 
\[
    \begin{tabular}{ccc}
        $\underbrace{(A, \prod)}_{\textrm{Finite } M\textrm{-algebra}}$ & 
        $\underbrace{h : \Sigma \to A}_{\textrm{Input function}}$ &
        $\underbrace{\lambda : \Sigma \to \{ \Yes, \No \}}_{\textrm{Acceptance function}}$,
    \end{tabular}
\]
such that the characteristic function of $L$ is equal to the following composition:
\[ M \Sigma \transform{M \, h} M A \transform{\prod} A \transform{\lambda} \{ \Yes, \No \} \]
\end{definition}
Thanks to \cref{ex:monoids-as-em-algebras}, it is not hard to see that for $M$ equal to
the monad of finite lists, the definition of $M$-recognizability is equivalent to \cref{def:monoid-recognisable}, 
which further means that finite-list recognizability is equivalent to regularity. 
There are many other examples of monads, which define important classes of $M$-definable languages 
(e.g. see \cite[Section 4.3]{bojanczyk2020languages}). In this paper, let us define two more,
the \emph{countable order monad} and the \emph{terms monad}:

\begin{example}[{\cite[Example~15]{bojanczyk2015recognisable}}]
\label{ex:countable-orders-monads}
Define a \emph{countable chain} over a set $X$, to be a countable linear order 
where every position is labelled with an element of $X$.
We say that two chains are equal if there is an isomorphism between them that 
preserves both the order and the labels\footnote{
    It is worth noting that the set of all countable sets is not, strictly speaking, a set. 
    However, because we equate all chains modulo isomorphism, the set of all 
    chains over any given $X$ forms a set. 
    This is because, we can select an arbitrary infinite countable set
    and assume that all positions in every chain are elements of that set.
}.
Let us denote the set of all countable chains over $X$ as $C X$, and 
let us show that $C$ is a monad.
First, let us define the functor structure on $C$. 
If $f$ is a function of type $X \to Y$, 
then $C f : C X \to C Y$ is a function that applies $f$ to every label of a chain
(and does not modify the linear order). This leaves with
defining the monad structure: The \emph{singleton} operation $\eta : X \to C X$ 
returns a one element chain, whose only position is labelled by the input letter.
The \emph{flatten} operation $\mu_X : C C X \to C X$ is defined in terms of the lexicographic product:
Let $w : C (C X)$ be a chain of chains, then $\mu w$ is defined as follows: 
\begin{enumerate}
    \item Its positions are pairs $(x, y)$ where $x$ is a position in $w$ and 
          $y$ is a position in $w_x$ (i.e. the label of $x$ in $w$);
    \item The label of $(x, y)$ is equal to the label of $y$;
    \item The pairs are ordered lexicographically, i.e. $(x_1, y_1) \leq (x_2, y_2)$ if 
            $x_1 < x_2$ or if $x_1 = x_2$ and $y_1 \leq y_2$.
\end{enumerate}
It is not hard to verify that this construction satisfies the monad axioms
(see \cite[Example~15]{bojanczyk2015recognisable} for details).
It follows that we can apply \cref{def:recognisable-language} and 
to define the class of $C$-recognizable languages. 
Interestingly, it turns out that this class is equal to 
the class of languages that can be recognized by the \textsc{mso}-logic
(see \cite[Theorem~5.1]{carton2018algebraic} or \cite[Theorems~3.12~and~3.21]{bojanczyk2020languages} for details), 
which is a strong argument for the intuition that $C$-recognizability corresponds to the intuitive notion of regularity. 
In particular, if we only consider those $C$-recognizable languages that only contain $\omega$-words,
we obtain the class of $\omega$-regular languages. 
\end{example}

\begin{example}[{\cite[Example~21]{bojanczyk2020languages}}]
\label{ex:term-monad}
    Fix a ranked set $\mathcal{S}$, i.e. a set where every element has an associated
    arity from the set $\{0, 1, 2, \ldots \}$. 
    For every set $X$, we define $T_{\mathcal{S}} X$ to be the set of all finite trees 
    where all leaves are labelled with elements from $X$, 
    and all inner nodes are labelled with elements from $\mathcal{S}$, 
    in such a way that the number of children of each inner node is equal to 
    the arity of its label. Here is an example for $ X = \{x, y, z\}$ and
    $\mathcal{S} = \{ \underbrace{a}_{\textrm{arity 2}}, \underbrace{b}_{\textrm{arity 1}}, \underbrace{c}_{\textrm{arity 0}}  \}$:
    \custompicc{term-ex}{0.2}
    \noindent (Observe that the leaves labelled with elements of arity $0$ from $\mathcal{S}$ are treated as inner nodes.)
    \noindent The set $T_\mathcal{S} X$ can also be seen as the set of terms over the signature $\mathcal{S}$, 
    with variables from~$X$. For example, for the following $\mathcal{S}$, 
    elements of $T_\mathcal{S}X$ are terms of propositional logic:
    \[ \{ \underbrace{\vee}_{\textrm{arity 2}}, \underbrace{\wedge}_{\textrm{arity 2}},
    \underbrace{\neg}_{\textrm{arity 1}}, \underbrace{\mathtt{true}}_{\textrm{arity 0}}, \underbrace{\mathtt{false}}_\textrm{arity 0} \} \]
    For every fixed $\mathcal{S}$, we define a monad structure for $T_\mathcal{S}$ called \emph{the term monad}.
    The function mapping $T_\mathcal{S} f$ applies $f$ to every leaf, and does not modify the inner nodes.
    The \emph{singleton} operation returns a tree that consists of a single leaf, labelled by the input argument. 
    Finally, the \emph{flatten} operation simply unpacks the trees from the leaves, as presented in the following figure:
    \custompicc{term-flatten-ex}{0.5}
    \noindent
    Since $T_\mathcal{S}$ is a monad, we can use \cref{def:recognisable-language} and define the class of $T_\mathcal{S}$-recognizable languages. 
    It turns out that this class coincides with the usual notion of regularity for finite-tree languages -- 
    this is because $T_\mathcal{S}$-algebras turn out to be practically the same as deterministic bottom-up tree automata 
    (but without distinguished initial and accepting states -- which are replaced by the input and output functions from 
    \cref{def:recognisable-language}). See \cite[Example~21]{bojanczyk2020languages} for details.
\end{example}

Examples \ref{ex:countable-orders-monads} and \ref{ex:term-monad} (together with all the examples from \cite[Sections~4.2,~5.1, and~6.2]{bojanczyk2015recognisable}) 
show that \cref{def:recognisable-language} is abstract enough to capture the notion of regularity for many different types of objects. 
On the other hand, it is also concrete enough to allow for an interesting general theory of $M$-recognizability.
Examples of theorems include existence of syntactic algebras (\cite[Theorem~4.19]{bojanczyk2020languages}) and
equivalence of algebra and language varieties (\cite[Theorem~4.26]{bojanczyk2020languages}). 
There is also an ongoing quest for relating \textsc{mso}-definability and $M$-recognizability 
(see \cite{bojanczyk2023monadic}). 

\section{Comonads and transducers}\label{sec:comonads-and-transdcuers}
In this section, we extend the theory of $M$-recognizability to transducers. We start 
by presenting the definition of a \emph{comonad}, which is the dual notion to a monad.
Then, we show that for every functor $M$, that is both a monad and a comonad, 
we can define the notion of $M$-recognizable
transductions. Finally, we provide some examples of such functors $M$, and discuss their $M$-recognizable transduction classes.
Let us start with the definition:
\begin{definition}
\label{def:comonad}
        A \emph{comonad} is a functor $M$ equipped with two operations:
        \[
            \begin{tabular}{ccc}
                    $\counit_X : M X \to X$ & and & $\delta_X :  M X \to  M M X$  
            \end{tabular}
        \]
        We are going to refer to $\counit$ as the \emph{extract} operation,
        and to $\delta$ as the \emph{expand} operation. 
        The axioms of a  comonad are dual to the axioms of a monad, and can be found in Section~\ref{subsec:comonad-axioms} of the appendix. 
\end{definition} 

\begin{example}\label{ex:non-empty-lists-comonad}
For example, let us consider the functor $X^+$ of non-empty lists (where the lifting 
operation $f^+$ is defined as applying $f$ to every element of the list).
For this functor, we can define the comonad structure in the following way.
The \emph{extract} operation returns the last element of a list, and the \emph{extend} operation transforms a list
into a list of all its prefixes:
\[
    \begin{tabular}{cc}
        $\counit([x_1, \ldots, x_n]) = x_n$ &  $\delta([x_1, \ldots, x_n]) = [[x_1], [x_1, x_2],  \ldots, [x_1, \ldots x_n]]$
    \end{tabular} 
\]
(Observe that the definition of $\varepsilon$ crucially depends on the input being a non-empty list.)
It is not hard to verify that such $\delta$ and $\varepsilon$ satisfy the axioms of a comonad.

This is not the only comonad structure one can define for $X^+$. Symmetrically,
\emph{extract} could return the first element, and \emph{extend} could compute the list of all suffixes:
\[
    \begin{tabular}{cc}
        $\counit([x_1, \ldots, x_n]) = x_1$ &  $\delta([x_1, \ldots, x_n]) = [[x_1, \ldots, x_n], [x_2, \ldots, x_{n}], \ldots, [x_n]]$
    \end{tabular} 
\]
To tell those two comonads apart, we denote the \emph{prefix comonad} as $\overrightarrow{L} X$, 
and the \emph{suffix comonad} as $\overleftarrow{L} X$.
\end{example}

Next, let us observe that $X^+$ also exhibits the structure of a monad. The structure is the same as the one for $X^*$ --
the \emph{flatten} operation flattens a list into a list of lists (note that this preserves nonemptiness), 
and the \emph{unit} operation returns a singleton list.
This means that functors $\overrightarrow{L} X$  and $\overleftarrow{L} X$ are at the same time both monads and comonads.
The key observation made in this paper is that for such functors one can define a natural class of transductions:
\begin{definition}
    \label{def:recognisable-transducers}
    Let $M$ be a functor that is both a monad and a comonad. We say that a function $f : M \Sigma \to M \Gamma$ 
    is an $M$-recognizable transduction if there exist: 
    \[
        \begin{tabular}{ccc}
            $\underbrace{(A, \prod)}_{\substack{
                \textrm{a finite } M\textrm{-algebra}\\
                (\textrm{with respect the monad structure of } M)
            }}$  &
            $\underbrace{h : \Sigma \to A}_{\textrm{an input function}}$ &
            $\underbrace{\lambda : A \to \Gamma}_{\textrm{an output function}}$,
        \end{tabular}
    \]
    such that $f$ is equal to the following composition: 
    \[ M \,\Sigma \transform{M\, h} M \, A \transform{\delta} M\, M\, A \transform{M\, \prod} M\, A \transform{M\, \lambda} M\, \Gamma\]
\end{definition}

As it turns out, many interesting classes of transductions can be defined as $M$-recognizable transducers 
for a suitable $M$. In Subsections~\ref{subsec:word-to-word-transducer-functors}~and~\ref{subsec:other-examples}, we present some examples.

\subsection{Examples of word-to-word transductions}
\label{subsec:word-to-word-transducer-functors}
We start by studying $\overrightarrow{L}$-recognizable transductions.
After unfolding the definition, we obtain that each such transduction is defined by
a finite semigroup\footnote{A semigroup is a monoid that might not have the identity element. A reasoning 
similar to \cref{ex:monoids-as-em-algebras} demonstrates that semigroups
are the Eilenberg-Moore algebras for the monad $X^+$.} $S$, an input function $h : \Sigma \to S$,
and an output function $\lambda : S \to \Gamma$. The transduction is then defined in the following way:
\[ \begin{tabular}{ccc}
    $\underbrace{a_1\ a_2\ \ldots\ a_n}_{\Sigma^+} $ & $\mapsto$ & $\underbrace{\lambda\left( h(a_1)\right),\  \lambda\left(h(a_1) \cdot h(a_2)\right),\ \ldots,\  \lambda\left(h(a_1) \cdot \ldots \cdot h(a_n)\right)}_{\Gamma^+}$
\end{tabular}
\]  
In other words, we obtain the $i$-th letter of the output, by taking the 
$i$-th prefix of the input, computing the $S$-product of its $h$-values, 
and applying the output function $\lambda$. By comparing this with
\cref{def:monoid-recognisable}, we see that this means that the 
$i$-th letter of the output is computed based on regular properties of the $i$-th prefix.
This means that $\overrightarrow{L}$-recognizable transductions are equivalent 
to a well-studied transducer model called \emph{Mealy machines}. (See \cref{subsec:mealy} for 
the exact definition of Mealy machines and the proof of equivalence.)
Similarly, one can show that the class of $\overleftarrow{L}$-recognizable transductions
is equivalent to the right-to-left variant of Mealy machines.

Next, we consider \emph{length-preserving rational functions}, 
which can be defined as the class of transductions recognized by unambiguous Mealy machines
(see \cref{subsec:rational-lenght-preserving} for the definition), or more abstractly
as the class of transductions where the $i$-th letter of the output depends
on the $i$-th letter of the input and on regular properties of the $(i-1)$-st prefix and $(i+1)$-st suffix. 
Below we define a monad/comonad functor of \emph{pointed list}\footnote{
    This is a well-known functor
        and both its monad \cite[Section~8]{bojanczyk2015recognisableFull} and its comonad \cite[Section 3.2.3]{orchard2014programming} structures have been studied in the past.
        (Although, rarely together.)
} that recognizes this class:
\begin{definition}
    We define $\overline{L} X$ to be the set of all non-empty lists over $X$ with exactly one underlined element.
    For example, $[a, \underline{b}, b, c]$ is an element of  $\overline{L} \{a, b, c\}$. 
    This is clearly a functor, with $\overline{L} f$ defined as simply applying $f$ to every element of the list (and 
    keeping the underline where it was). The monadic and comonadic operations work as follows: 
    \begin{enumerate}
        \item The \emph{singleton} operation returns a list with one underlined element, e.g.: $\eta(a) = [\underline{a}]$. 
        \item The \emph{flatten} operation, flattens a list of list, while keeping the double-underlined element:
        \[ \mu\left(\left[[a,\ \underline{b},\ c],\ \underline{[\underline{d},\ e,\ f]},\ [g,\ \underline{h}]\right]\right) = [a,\ b,\ c,\ \underline{d},\ e,\ f,\ g,\ h]\]
        \item The \emph{extract} operation extracts the underlined element, e.g.: $\counit([a,\ \underline{b},\ c]) = b$. 
        \item The \emph{extend} operation generates a series of new lists, each being a copy of the original list,
              but with a different, consecutive element underlined in each of the copies. Finally, it underlines the copy
              that is exactly equal to the input list (including the underlined element): 
              \[\delta([a,\ \underline{b},\ c ]) = \left[ [\underline{a}, b, c], \underline{[a, \underline{b}, c]}, [a, b, \underline{c}] \right]\]
    \end{enumerate}
\end{definition}

Before we discuss $\overline{L}$-definable transducers, let us formulate a lemma about $\overline{L}$-algebras.
The lemma is based on \cite[Section~8.1]{bojanczyk2015recognisableFull} and shows
that computing products in $\overline{L}$-algebras, boils down to computing two monoid products
(for proof, see Section~\ref{subsec:rational-lenght-preserving} of the appendix):
\begin{lemma}
\label{lem:pointed-alg-prod-two-mon-prod}
    For every $\overline{L}$-algebra $(A, \prod)$, there are two monoids $M_L$ and $M_R$, 
    together with functions $h_L : A \to M_L$, $h_R : A \to M_R$, such that the value of 
    every $A$-product 
    \[\prod([a_1, \ldots, \underline{a_i}, \ldots, a_n])\]
    depends only on: 
    \begin{enumerate}
    \item the $M_L$-product of the prefix (i.e. $h_L(a_1) \cdot \ldots \cdot h_L(a_{i-1})$), 
    \item the $M_R$-product of the suffix (i.e. $h_R(a_{i+1}) \cdot \ldots \cdot h_R(a_{n})$), and
    \item the exact $A$-value of the underlined element ($a_i$).
    \end{enumerate}
    Moreover, if $A$ is finite then both $M_L$ and $M_R$ are finite as well. 
\end{lemma}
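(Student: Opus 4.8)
The plan is to extract the two monoids directly from the given $\overline{L}$-algebra $(A,\alpha)$ by freezing the underline in appropriate positions, in the spirit of \cref{ex:monoids-as-em-algebras}. First I would define $M_L$. The idea is that a prefix $[a_1,\dots,a_{i-1}]$ (with no underline of its own) should act on the underlined suffix. So fix an auxiliary element, say put the underline on a fresh dummy, and consider the map that sends an unmarked list $[a_1,\dots,a_k]$ to the function $A \to A$ given by $b \mapsto \alpha[a_1,\dots,a_k,\underline{b}]$; using the associativity encoded by the $\mu$-axiom of \cref{def:em-algabra} (exactly as in \cref{ex:monoids-as-em-algebras}, where flattening a list of lists lets us regroup a product), concatenation of prefixes corresponds to composition of these functions, and the empty prefix corresponds to the identity function (via the $\eta$-axiom). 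Hence the image of this map inside the transformation monoid $A^A$ is a submonoid $M_L$, and $h_L : A \to M_L$ is $a \mapsto (b \mapsto \alpha[a,\underline{b}])$. Symmetrically, $M_R$ is obtained by freezing the underline on the left: $h_R(a) = (b \mapsto \alpha[\underline{b},a])$, with the suffix $[a_{i+1},\dots,a_n]$ acting on the left-underlined element, and $M_R$ is the corresponding submonoid of $A^A$.

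Next I would verify the dependency claim. Given a general marked list $[a_1,\dots,\underline{a_i},\dots,a_n]$, I want to show $\alpha$ of it is a function of $h_L(a_1)\cdots h_L(a_{i-1})$, of $a_i$, and of $h_R(a_{i+1})\cdots h_R(a_n)$. The key computation is an associativity identity: splitting the list as the prefix, the marked singleton, and the suffix, and flattening, the $\mu$-axiom gives
\[
\alpha[a_1,\dots,\underline{a_i},\dots,a_n] \;=\; \alpha\big[\, p,\ \underline{a_i},\ q \,\big]^{(\text{suitable reading})},
\]
where more precisely one shows $\alpha[a_1,\dots,\underline{a_i},\dots,a_n]$ equals the result of first computing an intermediate value that only sees the prefix and $a_i$ — namely $h_L(a_1)\cdots h_L(a_{i-1})$ applied to $a_i$ — and then letting the suffix act via $h_R$. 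I would make this precise by inserting singletons around $a_i$ (justified by the $\eta$-axiom) and then regrouping with $\mu$ to peel the prefix off on the left and the suffix off on the right; each peel replaces a block by its monoid product. The upshot is an explicit function $g : M_L \times A \times M_R \to A$ with $\alpha[a_1,\dots,\underline{a_i},\dots,a_n] = g\big(h_L(a_1)\cdots h_L(a_{i-1}),\ a_i,\ h_R(a_{i+1})\cdots h_R(a_n)\big)$, which is exactly what the lemma asserts.

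Finally, finiteness is immediate: if $A$ is finite then $A^A$ is finite, so its submonoids $M_L$ and $M_R$ are finite as well.

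The main obstacle I anticipate is purely bookkeeping: getting the regrouping identities exactly right, i.e. checking that the block-flattening I perform really is an instance of the $\mu$-compatibility square in \cref{def:em-algabra} (with the underline tracked correctly through $\mu$, since the flatten on $\overline{L}$ keeps the doubly-underlined element), and confirming that the associativity of the induced operations on $M_L$ and $M_R$ follows formally — this is the $\overline{L}$-analogue of the argument in \cref{ex:monoids-as-em-algebras}, but one must be careful that the "acting" lists carry no underline of their own, so that the relevant instances of $\mu$ and $\eta$ apply. I do not expect any genuinely new idea beyond that; the one modelling choice to pin down is whether to take $M_L, M_R$ as abstract quotients of the free monoid or concretely as submonoids of $A^A$ — I would take the latter, since it makes finiteness transparent.
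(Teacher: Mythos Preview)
Your proposal is correct and follows essentially the same route as the paper: both define $M_L$ and $M_R$ as the transformation monoids inside $A^A$ given by $h_L(a)=\big(b\mapsto\alpha[a,\underline{b}]\big)$ and $h_R(a)=\big(b\mapsto\alpha[\underline{b},a]\big)$, use the $\mu$-axiom to show these are closed under composition, and conclude by expressing $\alpha[a_1,\dots,\underline{a_i},\dots,a_n]$ as the suffix action applied to the prefix action applied to $a_i$. Your treatment is in fact slightly more careful than the paper's (you make the finiteness argument explicit via $M_L,M_R\subseteq A^A$ and flag the need for the $\eta$-axiom to get the identity and to handle empty prefix/suffix), but there is no genuine difference in strategy.
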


We are now ready to discuss $\overline{L}$-definable transductions. Each such transduction 
is, by definition, given by a finite $\bar{L}$-algebra $(A, \prod)$, an input function $h : \Sigma \to A$,
and an output function $\lambda : A \to \Gamma$. A transduction given in this way computes its $i$-th output letter as: 
\[\lambda(\prod([h(a_1), \ldots, \underline{h(a_i)}, \ldots,h(a_n)]))\]
Thanks to \cref{lem:pointed-alg-prod-two-mon-prod}, we know that we know that there are two monoids $M_L$ and $M_R$, 
such that we can compute this value based on the $M_L$-product of the prefix,
$M_R$-product of the suffix and the value of $h(a_i)$. It follows, by \cref{def:monoid-recognisable} of regularity, 
that we can compute the $i$-th letter of the output based on some regular properties of the prefix, 
some regular properties of the suffix, and on the $i$-th input letter. This means that every
$\overline{L}$-definable transduction is also a rational length-preserving function\footnote{
    It might be worth mentioning that there is a slight type mismatch between the types of rational length-preserving
        transductions and $\overline{L}$-definable transductions. The former are of the type $\Sigma^+ \to \Gamma^+$, and the latter 
        of type $\bar{L}(\Sigma) \to \bar{L}(\Gamma)$. We can deal with this mismatch, by observing that
        for the $\bar{L}$-definable transducers, the position of the underlined element does not influence the underlying output word. 
        See Appendix~B.3 for more details. 
}. To prove the other inclusion, we use a similar idea to transform an unambiguous Mealy machine into    $\bar{L}$-algebra. 
(See Appendix~B.3 for a more detailed proof.)


Here is a table that summarizes all classes of $M$-definable transductions, we have seen so far:
\[
    \begin{tabular}{|l|l|l|}
        \hline
        Transduction class        & Machine model                & Functor              \\ \hline
        Sequential left-to-right  & Mealy machines               & $\overrightarrow{L}$ \\
        Sequential right-to-left  & Right-to-left Mealy machines & $\overleftarrow{L}$  \\
        Rational lenght-preserving & Unambigous Mealy machines    & $\overline{L}$       \\ \hline
    \end{tabular}
\]

Observe that all those examples are length-preserving.
This is a consequence of a more general principle, which can be stated using the \emph{shape} of a functor:
\begin{definition}
    Let $1 = \{ \bullet\}$ be a singleton set. For every functor $F$ and every $l \in F X$, we define the \emph{shape}
    of $l$ as the element of $F \, 1$, obtained by replacing every element of $X$ by $\bullet$:
    \[\mathtt{shape}(l) = (F ( x \mapsto \bullet ))\; l\]
\end{definition}
For example, the shapes of both $\overrightarrow{L}$ and $\overleftarrow{L}$ are their lengths,
and the shape $\bar{L}$ is its length and the position of its underlined element.
It is not hard to see that all $M$-definable transductions are shape-preserving: 
\begin{lemma}
\label{lem:definable-preseve-shape}
    For every $M$-definable transduction $F : M \Sigma \to M \Gamma$, and for every $w \in M \Sigma$, it holds that
    $\mathtt{shape}(F(w)) = \mathtt{shape}(w)$.
\end{lemma}

\subsection{Other examples}\label{subsec:other-examples}
\paragraph*{Infinite words}
In this section we extend the monad $C$ from \cref{ex:countable-orders-monads} with 
three different comonad structures $\overrightarrow{C}$, $\overleftarrow{C}$, and $\bar{C}$
(which are analogous to $\overrightarrow{L}$, $\overleftarrow{L}$ and $\bar{L}$ from \cref{subsec:word-to-word-transducer-functors}), 
and we briefly characterize the resulting classes of transducers.

We define $\overrightarrow{C} X \subsetneq C X$, to be the set of all elements of $C X$ that have a maximal element.
To see that this is a monad (with the \emph{singleton} and \emph{flatten} operations inherited from $C$), we observe that
\emph{flatten} preserves the property of having maximal elements.
Next, we define a comonad structure on $\overrightarrow{X}$: \emph{extract} returns the label of the maximal element, 
and \emph{expand} labels each position with its prefix, i.e. the position $i$ of $\delta(l)$ is labelled with:
$\{ x\ |\ x \in l\; \wedge\; x \leq i \}$.
Observe that all such labels contain maximal elements -- 
the maximal element in the label of $i$ is $i$ itself.
As it turns out, the class of $\overrightarrow{C}$-definable transductions admits a logical characterization:
One can show\footnote{The proof follows from the fact that $C$-recognisability is equivalent to \textsc{mso}-definability.
However, since \textsc{mso}-transductions fall slightly out of scope for this paper, we are not going
to give a precise proof. For other transducer classes in this section, we use a similar \textsc{mso}-recognisability argument.}
that it is equivalent to the class of transductions, that preserve the underlying order (see \cref{lem:definable-preseve-shape}),
and compute the new label for each position $i$ based on \textsc{mso}-formulas that only see the positions $\leq i$.
The definition of $\overleftarrow{C}$ and the characterization of $\overleftarrow{C}$-definable transduction are analogous.

Next, we define $\bar{C} X$ to be the set of all elements of $C X$ where exactly one position is underlined.
The monad structure of $\bar{C} X$ is a generalization of the monad structure of $\bar{L}$: 
\emph{singleton} returns a single underlined element, and 
\emph{flatten} flattens the input and underlines the doubly underlined position.
Similarly, the comonad structure of $\bar{C} X$ generalizes the comonad structure of $\bar{L}$: 
\emph{extract} returns the label of the underlined element, and \emph{expand}
labels every position $i$ of its input with a copy of the input where $i$ is the underlined position,
and underlines the copy that corresponds to the underlined position of the input.
The class of $\bar{C}$-definable transduction also admits 
a logical characterization: It is equivalent to the class of transductions that 
preserve the underlying order and the position of the underline,
and compute the output labels based on \textsc{mso}-formulas that see
the entire input.

\paragraph*{Terms}
Finally we define $\bar{T}_{\mathcal{S}}$ to be \emph{pointed} version of the term functor from \cref{ex:term-monad}, 
and we equip it with structures of a monad and a comonad. 
The construction is analogous\footnote{
    It seems that the \emph{pointing} construction is a general way of equipping a monad 
    with a comonad structure. 
} to  $\bar{L}$ and $\bar{C}$.
We define $\bar{T}_\mathcal{S} X$, to be the set of trees (from $\bar{T_\mathcal{S}} X$) with
exactly one underlined leaf. The monadic and comonadic operations are defined analogously 
as for $\bar{L}$ and $\bar{C}$.

As it turns out, the class of $\bar{T_\mathcal{S}}$ definable transductions, also admits a logical characterization:
it is equivalent to the class of tree-to-tree transductions that only modify the labels of the leaves
(this is a consequence of \cref{lem:definable-preseve-shape}) and calculate the 
output label for each leaf based on \textsc{mso}-formulas that have access to the entire input tree.

\section{Compositions of recognizable transducers}\label{sec:compositions-of-recognisable-transductions}
So far we have introduced, and presented a few examples of $M$-recognizable transductions. 
In this section we are going to prove \cref{thm:compose} which states that $M$-recognizable transductions 
are closed under compositions, i.e. if $f : M\,\Sigma \to M\,\Gamma$ and $g : M\,\Gamma \to M \,\Delta$ both
are $M$-recognizable transductions, then so is their composition $g \circ f : M\,\Sigma \to M\,\Delta$.
In addition to the axioms of a functor, monad, and comonad, we have seen so far, 
the proof of the theorem requires some additional \emph{coherence axioms} which relate the monadic and the comonadic structures of $M$.
Here are three examples of such axioms respectively called \emph{flatten-extract}, \emph{singleton-expand}, and \emph{singleton-extract}\footnote{
    To the best of our knowledge, this axiom has not appeared so far in the literature.
}: 
\[\begin{tikzcd}
	{M\, M \, X} && {M \, X} & X && {M X} & X && {M X} \\
	\\
	{M \, X} && X & {M X} && {M M X} &&& X
	\arrow["{\mu_X }", from=1-1, to=1-3]
	\arrow["{\varepsilon_X }", from=1-3, to=3-3]
	\arrow["{\varepsilon_{M X}}"', from=1-1, to=3-1]
	\arrow["{\varepsilon_X}"', from=3-1, to=3-3]
	\arrow["{\eta_X}", from=1-4, to=1-6]
	\arrow["{\eta_X}"', from=1-4, to=3-4]
	\arrow["{M \eta_X}"', from=3-4, to=3-6]
	\arrow["{\delta_X}", from=1-6, to=3-6]
	\arrow["{\eta_X}", from=1-7, to=1-9]
	\arrow["{\varepsilon_X}", from=1-9, to=3-9]
	\arrow["id"', from=1-7, to=3-9]
\end{tikzcd}\]
The other axioms postulate the existence of an additional structure on $M$. 

\subsection{The \texttt{put}-structure and its axioms}
For a comonad $M$, let us consider the following operation\footnote{\label{footnote:lens}The operation $\putf$ has 
already been studied in the context of functional programming. In this context, $M$ does not need to be a (full) comonad, 
it is, however, required to implement the operation $\mathtt{get} : M X \to X$, which in our case is equal to $\counit$. 
In this context, the pair ($\mathtt{get}, \putf$) is usually referred to as a $\emph{lens}$. 
See \cite[Section~1]{fischer2015clear} for details. See \cite[Section~3.1]{foster2007combinators} for the original reference.}:
\[ \putf_A\ : \  M \, A \times A \to M A \]
The intuition behind $\putf$ is that it replaces the \emph{focused element} of $M A$ with the 
given element from $A$. The intuition behind the \emph{focused element} is that this is the element 
that is going to be returned by the \emph{extract} operation. For example, the focused element in $\bar{L}$ is 
the underlined element, and in $\overrightarrow{L}$ it is the last element of the list. Here are two examples of $\putf$:
\begin{center} 
\begin{tabular}{cc}
        $\underbrace{\putf([1, \underline{2}, 3, 4], 7) = [1, 7, 3, 4]}_{\bar{L}}$ & 
        $\underbrace{\putf([1, 2, 3], 5) = [1, 2, 5]}_{\overrightarrow{L}}$
    \end{tabular}
\end{center}

The goal of this subsection is to formalize this intuition in terms of axioms.
First, we assume that $\putf$ is a natural transformation (see \cref{subsec:put-naturality} for details). 
Next, we assume the following axioms that relate $\putf$ and $\counit$. 
They are called \emph{get-put}, \emph{put-get}, and \emph{put-put}\footnote{The axioms and their names come from the \emph{lens}-related research.
See \cite[Section~1]{fischer2015clear}.}:
\[\begin{tikzcd}
	{M \, A} && {M\, A \times A} & {M A \times A} && {M A } & {(M A \times A) \times A } && {M A  \times A} \\
	\\
	&& {M\, A} &&& A & {M A \times A } && {M A }
	\arrow["id"{description}, from=1-1, to=3-3]
	\arrow["{\langle id,\, \varepsilon_A \rangle }"{description}, from=1-1, to=1-3]
	\arrow["{\texttt{put}_A}"{description}, from=1-3, to=3-3]
	\arrow["{\pi_2 }"{description}, from=1-4, to=3-6]
	\arrow["{\varepsilon_A}"{description}, from=1-6, to=3-6]
	\arrow["{\mathtt{put}_A }"{description}, from=1-4, to=1-6]
	\arrow["{\mathtt{put}_A}"{description}, from=1-9, to=3-9]
	\arrow["{\mathtt{put}_A}"{description}, from=3-7, to=3-9]
	\arrow["{\pi_2 \times \mathtt{id}}"{description}, from=1-7, to=3-7]
	\arrow["{\mathtt{put}_A \times id}"{description}, from=1-7, to=1-9]
\end{tikzcd}\]
Here the functions $\langle f, g \rangle$, $f \times g$, $\pi_2$ are defined as follows: 
\[
    \begin{tabular}{ccc}
        $\langle f, g \rangle(x) = (f (x),\ g (x))$ & $(f \times g)(x_1,\ x_2) = (f(x_1),\ g(x_2))$ & $\pi_2(x, y) = y$
    \end{tabular} 
\]
\noindent
The following axioms called \emph{put-associativity} and \emph{singleton-put} relate $\putf$ with the structure of a monad\footnote{To the best of our knowledge this axiom has not appeared previously in the literature.}:
\[\begin{tikzcd}
	{M M A \times M A \times A} && {M M A \times M A } & {M M A } & {A \times A} && {MA \times A} \\
	\\
	{M M A \times A } && {M A \times A} & {M A } & A && {M A}
	\arrow["{\mathtt{put}_A}", from=3-3, to=3-4]
	\arrow["{\mu_A \times id}", from=3-1, to=3-3]
	\arrow["{\mathtt{put}_{M A} \times id}"{description}, from=1-1, to=3-1]
	\arrow["{id \times \mathtt{put}_A}", from=1-1, to=1-3]
	\arrow["{\mathtt{put}_{M A}}", from=1-3, to=1-4]
	\arrow["{\mu_A}"{description}, from=1-4, to=3-4]
	\arrow["{\pi_2}", from=1-5, to=3-5]
	\arrow["{\eta_A}", from=3-5, to=3-7]
	\arrow["{\eta_A \times id }"', from=1-5, to=1-7]
	\arrow["{\mathtt{put}_A}"', from=1-7, to=3-7]
\end{tikzcd}\]
The final axiom relates all the structures studied in this paper: monad, comonad, and $\mathtt{put}$.
Before we present it, we need to define the \emph{strength of a functor}\footnote{
    This definition of $\strength$ is specific to \Set.
    We briefly discuss other categories in \cref{subsec:ccc}.
}.
\begin{definition}\label{def:strength}
    For every functor $F$, we define $\strength_{(A, B)} : A \times M B \to M (A \times B)$:
    \[ \strength(a, l) = F ( x \mapsto  (a, x) )\ l  \]
    Intuitively, the function $\strength_{(A, B)}$ equips each element $B$ under the functor $F$
    with a copy of $a \in A$.
    Here is an example, for $F$ equal to the list functor:
    \[ \strength(c, [a, b, a, b]) = [(c, a), (c, b), (c, a), (c, b)]\]
\end{definition}

\noindent
We are now ready to present the final coherence axiom, called \emph{flatten-expand}\footnote{To the best of our knowledge the axiom has not appeared before in the literature.}:
\[\begin{tikzcd}
	& {M M M A} && {M M M A} \\
	\\
	{M M A} &&&& {M M A } \\
	&& {M A}
	\arrow["{\mu_A}"{description}, from=3-1, to=4-3]
	\arrow["{\delta_A}"{description}, from=4-3, to=3-5]
	\arrow["{\mu_{M A}}"{description}, from=1-4, to=3-5]
	\arrow["{\delta_{M A}}"{description}, from=3-1, to=1-2]
	\arrow["{M \mathtt{work}}", from=1-2, to=1-4]
\end{tikzcd}\]
where the function $\mathtt{work}$ is defined as the following composition:
\[ M M A \transform{\langle \id, \counit \rangle } M M A \times M A \transform{\id \times \delta} M M A \times M M A \transform{\mathtt{strength}} M (M M A \times M A)
\transform{M \putf} M M M A \transform { M \mu} M A \]
Let us now briefly present the intuition behind the flatten-expand axiom. 
The starting point $ M M X$ represents a structure partitioned into substructures
(e.g. a list partitioned into sublists), which we would like to expand using $\delta$.
The bottom path of the diagram represents the straightforward approach: First, 
it flattens the input using $\mu_A$ (forgetting about the substructure partitions),
and then it applies the $\delta_A$ function.
The flatten-expand axiom asserts that this can be done in a way that 
respects the initial partitions. This way is represented by the 
top path of the diagram:
First, it applies the $\delta_{M A}$ function to expand the top structure; 
then it applies the $\mathtt{work}$ function
independently to each of the substructures using $M\, \mathtt{work}$
(this can also be seen as a concurrent computation); 
and finally it aggregates the results of $\mathtt{work}$ using $\mu_{M A}$.
(See \cref{subsec:flatten-expand-ex} for a step-by-step example.)
It might also be worth mentioning that the flatten-expand axiom has an alternative
formulation in terms of a \emph{bialgebra} (see \cref{subsec:flatten-expand-as-bialgebras}). 

Finally, let us mention that it is not hard to all the examples of monad/comonad functors from \cref{sec:comonads-and-transdcuers}
with the natural $\putf$ operation,  and show that they satisfy all the axioms we have introduced.


\subsection{Contexts}\label{subsec:ctx}
In this section we use the $\putf$ structure to introduce \emph{contexts}
for Eilenberg-Moore algebras. This concept plays an important role 
in the proof of \cref{thm:compose}. Additionally, we would like to 
highlight the potential of contexts as an independently interesting 
tool for studying Eilenberg-Moore algebras -- this point is 
illustrated by \cref{def:M-group}. In this subsection,
we only assume that $M$ is a monad equipped with the $\putf$ structure -- 
it does not depend on the comonad structure of $M$. 
\begin{definition}
    Let $(A, \prod)$ be an $M$-algebra. 
    For every element $l \in M A$, we define its \emph{context} to be the following 
    function $\ctx_l : A \to A$:
    \[ \ctx_l(x) = \prod(\putf(l, x)) \]
    In other words, the context of $l$ takes an element $x \in A$, replaces the focused element of $l$ with $x$,
    and calculates the product of the resulting $M A$. 
\end{definition}
Let us now discuss some properties of the contexts.
Thanks to the put-put axiom one can show that the context of $l$ does not depend on its focused element
(the formal proof is verified in Coq as \texttt{ctxPutInvariant}, see \cref{subsec:coq-contexts}).
\begin{lemma}\label{lem:ctx-put}
    For every $l \in MA$, and every $a \in A$, the context of $l$ is equal to the context of $\putf(l, a)$, i.e. 
    $\ctx_l = \ctx_{\putf(l, a)}$. 
\end{lemma}
Similarly, using the singleton-put axiom, one can show that the context of every singleton is the identity function
(the formal proof is verified in Coq as \texttt{ctxUnitId}, see \cref{subsec:coq-contexts}).
\begin{lemma} \label{lem:ctx-id}
    For every $a \in A$, it holds that $\ctx_{(\eta_A\, a)} = \id$. 
\end{lemma}
Now, let us consider the set of all contexts:
\begin{definition}\label{def:contexts}
For every $M$-algebra $(A, \alpha)$, we define the set $C_A \subseteq (A \to A)$ to be the set of all possible contexts, 
i.e. $C_A = \{ \ctx_l \ | \ l \in M A\}$.
\end{definition}
The important property of $C_A$ is that it is closed under compositions:
\begin{lemma}\label{lem:contexts-compose}
    For every $f, g \in C_A$, it holds that $f \circ g \in C_A$. 
\end{lemma}

It follows that $C_A$ is a transformation monoid of $A$. 
It follows that the mapping $A \mapsto C_A$ allows us to transform
an arbitrary $M$-algebra into a monoid. This is why, we believe that 
contexts, are an interesting tool for studying Eilenberg-Moore algebras.
To illustrate this point let us show how to generalize the definition of a \emph{group}:
\begin{definition}
\label{def:M-group}
    We say that an $M$-algebra $A$ is an \emph{$M$-group} if $C_A$ is a group 
    (i.e. for every function $f \in C_A$, its inverse $f^{-1}$ also belongs to $C_A$).  
\end{definition}
In order to validate this definition, let us show that for $M = \overrightarrow{L}$
the definition of $M$-group coincides with the usual definition of a group\footnote{
    If we only consider the monad structure then $\overrightarrow{L}$ is equal to $X^{+}$ (i.e. the monad of non-empty lists).
    However, we point out that we consider $\overrightarrow{L}$, because 
    the definition of $\putf$ depends on whether we consider $\overrightarrow{L}$ 
    or $\overleftarrow{L}$. (The proof for $\overleftarrow{L}$ is, however, analogous). 
} (remember that $\overrightarrow{L}$-algebras are semigroups). For this, let us fix
an $\overrightarrow{L}$-algebra $S$ (i.e. a semigroup). Observe now that, by definition,  
the context of an element $[s_1, \ldots, s_n]$ is equal to the following function:
\[ x \mapsto s_1 \cdot \ldots \cdot s_{n-1} \cdot x \]
It follows that every element of $C_S$ is of the form $x \mapsto s x$, 
where $s$ is an element of $S^1$, where $S^1$ is the smallest monoid that contains $S$, i.e.:
\[
    S^1 = \begin{cases}
        S & \textrm{if $S$ already contains and identity element}\\
        S + \{1\} & \textrm{otherwise}
    \end{cases}
\]
Here $1$ denotes the \emph{formal identity element} whose operations are defined as $1 \cdot x = x \cdot 1 = x$.
Moreover $C_S$ is isomorphic to $S^1$:
\[ (x \mapsto s_1 \cdot x) \circ (x \mapsto s_2 \cdot x ) = (x \mapsto s_1 \cdot s_2 \cdot x) \]
This finishes the proof, as it is not hard to see that $S$ is a group if and only if $S^1$ is a group. 

\subsection{Composition theorem}\label{subsec:composition-theorem}
We are now ready to formulate and prove the main theorem of this article: 
\begin{theorem}
\label{thm:compose}
    Let $M$ be a functor that is both a monad and a comonad, 
    for which there exists a $\putf : M A \times A \to M A$, that
    satisfies the axioms mentioned in this section, i.e.:
    flatten-extract, get-put, put-get, put-associativity, and flatten-expand. 
    Then, the class of $M$-definable transductions is closed under compositions. 
\end{theorem}
The reminding part of this section is dedicated to proving \cref{thm:compose}. After unfolding 
the definitions, this boils down to showing that for each pair of $M$-algebras $(S_1, \prod_1)$, $(S_2, \prod_2)$,
and for every $h_1 : \Sigma \to S_1$, $h_2 : \Gamma \to S_2$, $\lambda_1: S_1 \to \Gamma$, 
$\lambda_2 : S_2 \to \Delta$, there exists an $M$-algebra $(S_3, \prod_3)$ and functions $h_3 : \Sigma \to S_3$,
$\lambda_3 : S_3 \to \Delta$, that make the following diagram commute:
\[\begin{tikzcd}
	{M \Sigma} & {M S_1} & {M MS_1 } & {M S_1} & {M \Gamma} \\
	& {M S_3} &&& {M S_2} \\
	&& {M M S_3} && {M M S_2} \\
	&&& {M S_3} & {M S_2} \\
	&&&& {M \Delta}
	\arrow["{M h_1}", from=1-1, to=1-2]
	\arrow["\delta", from=1-2, to=1-3]
	\arrow["{M \prod_1}", from=1-3, to=1-4]
	\arrow["{M \lambda_1}", from=1-4, to=1-5]
	\arrow["{M h_2}", from=1-5, to=2-5]
	\arrow["\delta", from=2-5, to=3-5]
	\arrow["{M \prod_2}", from=3-5, to=4-5]
	\arrow["{M \lambda_2}", from=4-5, to=5-5]
	\arrow["{M h_3}"', dashed, from=1-1, to=2-2]
	\arrow["\delta"', dashed, from=2-2, to=3-3]
	\arrow["{M \prod_3}"', dashed, from=3-3, to=4-4]
	\arrow["{M \lambda_3}"', dashed, from=4-4, to=5-5]
\end{tikzcd}\]
As our $S_3$ we are going to use the following set:
\[ S_3 = S_1 \times (S_1^{S_1} \to S_2 ) \]
(Note that $S_1^{S_1}$ is a notation for $S_1 \to S_1$ -- we mix the arrow notation and the exponent notation for visual clarity.)
Because of its similarities with the \emph{wreath product} for semigroups, we call our construction for $(S_3, \alpha_3)$ as the \emph{generalized wreath product}, or \emph{$M$-wreath product} 
of $(S_1, \alpha_1)$ and $(S_2, \alpha_2)$. (In the appendix we give a definition of the classical wreath product and compare it with the generalized wreath product -- reading this
part of the appendix could make it easier to understand the construction presented below.)
Before we define $\alpha_3$, $h_3$ and $\lambda_3$, we describe what we would like the composition\footnote{
By \cite[Lemma~4.7]{bojanczyk2020languages}, this composition could be used to define the product operation on $S_3$.
However, since the lemma has extra assumptions, we show this function only for intuition.} $M \Sigma \transform{M h_3} M S_3 \transform{\alpha_3} S_3$ to do --
this way we can present some intuitions behind $S_3$. We start with a $w \in M \Sigma$
and we would like to produce a pair $S_1 \times (S_1^{S_1} \to S_2)$. The first component (i.e. $S_1$) is simply defined as the 
$S_1$-product of the input: 
\[M \Sigma \transform{M h_1} M S_1 \transform{\alpha_1} S_1\]
The interesting part is the second component (i.e. $S_1^{S_1} \to S_2$), which represents 
the $S_2$ product of the input. In order to compute it, we first apply the first $M$-transduction (i.e. ($S_1$, $h_1$, $\lambda_1$)), 
and then we compute the $S_2$-product of the result. This means that the $S_2$ product depends on the $S_1$-context 
in which we evaluate the input, so we provide it as the $S_1^{S_1}$ argument
(intuitively we are only interested in the functions from $C_{S_1}$, as defined in \cref{subsec:ctx}, 
but the definition makes formal sense for all functions $S_1^{S_1}$).
Here is how to compute the $S_2$-value based on the input word $w \in M \Sigma$ and the context $c \in S_1^{S_1}$:
We start by computing the $S_1$-products of the views (while, for now, ignoring the context $c$):
\[ S_1^{S_1} \times M \Sigma \transform{\id \times M h_1} S_1^{S_1} \times M S_1 \transform{\id \times \delta} S_1^{S_1} \times M M S_1 \transform{\id \times M \alpha_1} S_1^{S_1} \times M S_1\]
Next, we apply the context $c$ to each of the prefix products, and compute its $\Gamma$-value: 
\[S_1^{S_1} \times M S_1  \transform{\strength} M (S_1^{S_1} \times S_1) \transform {M \mathtt{app}} M S_1 \transform{M \lambda_1} M \Gamma\]
Finally, we compute the $S_2$-product of the result: 
\[ M \Gamma \transform{ M h_2} M S_2 \transform{\alpha_2} S_2\]

We are now ready to define $\lambda_3$, $h_3$, and $\alpha_3$.
In order to compute $\lambda_3$, we compute the $S_2$-value in the empty context (represented by $\id \in S_1^{S_1}$), and then apply $\lambda_2$: 
\[ \lambda_3((v_1, v_2)) = \lambda_2(v_2(\id)) \]
The function $h_3 : \Sigma \to S_1 \times (S_1^{S_1} \to S_2)$ is defined as follows: In order to compute the $S_1$ value we simply apply 
$h_1$ to the input letter, and in order to compute the $S_2$ value given the context $c \in S_1^{S_1}$, we apply $h_1$, $c$, $\lambda_1$ and $h_2$:
\[ h_3(a) = \left( h_1(a),\ \  c \mapsto h_2\left(\lambda_1(c(h_1(a)))\right)\  \right) \]
Finally, we define the product operation:
\[\alpha_3 : M ( S_1 \times (S_1^{S_1} \to S_2) ) \to  (S_1 \times (S_1^{S_1} \to S_2)) \]
We define $\alpha_3$ using two auxiliary functions $f_1$ and $f_2$:
\[
    \alpha_3(l) = \left(\, f_1(l),\  c \; \mapsto \;  f_2(c, l) \, \right)
\]
The first function $f_1 : M (S_1 \times (S_1^{S_1} \to S_2)) \to S_1$ computes the product of the $S_1$-values:
\[M ( S_1 \times (S_1^{S_1} \to S_2) ) \transform{M \pi_1} M S_1 \transform{\alpha_1} S_1\] 
The second function $f_2 : S_1^{S_1} \times M (S_1 \times (S_1^{S_1} \to S_2)) \to S_2$ is more complicated. We start by computing for each element, 
its view on the $S_1$-values while keeping its $(S_1^{S_1} \to S_2)$-value:
\[ S_1^{S_1} \times M (S_1 \times (S_1^{S_1} \to S_2)) \transform{\id \times \delta } S_1^{S_1} \times M M (S_1 \times (S_1^{S_1} \to S_2)) \transform{\id \times \langle M \pi_1, \pi_2 \circ \counit \rangle} 
S_1^{S_1} \times M ( (M S_1) \times (S_1^{S_1} \to S_2) )\]
Then we compute the context for each of those $S_1$-views:
\[S_1^{S_1} \times M ( (M S_1) \times (S_1^{S_1} \to S_2) )  \transform{\id \times M (\ctx \times \id)} S_1^{S_1} \times M ( S_1^{S_1} \times (S_1^{S_1} \to S_2) )  \]
Next, we compose the initial context with each of the intermediate contexts:
\[ S_1^{S_1} \times M ( S_1^{S_1} \times (S_1^{S_1} \to S_2) ) \transform{\strength} M (S_1^{S_1} \times S_1^{S_1} \times (S_1^{S_1} \to S_2)) \transform {M ((\circ) \times \id)} M (S_1^{S_1} \times (S_1^{S_1} \to S_2))\]
Now, in each position, we apply the function to the argument:
\[M (S_1^{S_1} \times (S_1^{S_1} \to S_2)) \transform{M ((x,f) \mapsto f(x))} M S_2 \]
Finally, we compute the product of the $S_2$ values: 
\[ M S_2 \transform{\alpha_2} S_2\]
This finishes the construction of $(S_3, \alpha_3)$. Now we need to show that it is indeed an $M$-algebra:
\begin{lemma} \label{lem:wreath-product-assoc}
The generalized wreath product $(S_3, \alpha_3)$, as defined above, is a valid $M$-algebra, i.e. for every $l \in M M S_3$, and every $x \in S_3$
it satisfies the following axioms (see \cref{def:em-algabra}):
\[\begin{tabular}{ccc} 
    $\alpha_3 ( \mu (l)) = \alpha_3((M \alpha_3)(l))$ & and & $\alpha_3(\eta(x)) = x$
\end{tabular}
\]
\end{lemma}
The proof of \cref{lem:ctx-id} is quite complex -- the main reason for this is that the definition of $(S, \alpha_3)$ is rather involved.
In contrast, the idea behind the proof is straightforward: we unfold all definition and perform equational reasoning using the axioms.
For this reason, we decided to formalize the proof in the Coq theorem prover --
it can be found as theorems \texttt{S3Associative} and \texttt{S3UnitInvariant} in the attached Coq file,
see \cref{subsec:composition-theorem-coq}.
Finally, we show that the $M$-transduction $(S_3, h_3, \lambda_3)$ computes the required compositions:
\begin{lemma}\label{lem:wreath-correct}
    The $M$-transduction $(S_3, h_3, \lambda_3)$ is equivalent to the composition of $M$-transductions $(S_1, h_1, \lambda_1)$ and $(S_2, h_2, \lambda_2)$. 
\end{lemma}
Similarly as for \cref{lem:wreath-product-assoc}, we prove \cref{lem:wreath-correct} by unfolding the definitions and applying the equational reasoning.
It is called \texttt{compositionCorrect} in the formalization, see \cref{subsec:composition-theorem-coq}.
The proof of \cref{lem:wreath-correct} finishes the proof of \cref{thm:compose}.

\section{Further work}\label{sec:futher-work}
\begin{description}
\item[1. Shape-modifying transductions.] \label{it:shape-modifying} Many important classes of transduction can modify the 
shape of their inputs. Examples of such classes for word-to-word transductions include
\emph{regular transductions} (defined, for example, by \emph{two-way transducers}, or \emph{\textsc{mso}-transductions)}
or \emph{polyregular transductions} (defined, for example, by \emph{for programs} or \emph{\textsc{mso}-interpretations}~\cite{bojanczyk2019string}).
Extending the definitions of $M$-definable transductions to capture those classes is, in our opinion, an interesting 
research direction. As a first step towards this goal, let us propose the following relaxation of $M$-definable transduction. 
The output function $\lambda$ is of type $A \to M \Gamma$ (rather than $A \to \Gamma$), and the transduction is defined as follows:
\[ M \Gamma \transform{M h} M A \transform{\delta} M M A \transform{M \alpha} M A \transform{M \lambda} M M A \transform{\mu} M A\]
For example, for $M = \overrightarrow{L}$ this new class corresponds where the transitions are allowed
to output more than one letter (but have to output at least one letter).

\item[2. Aperiodicity.] We say that a semigroup $S$ is \emph{aperiodic}, if there is no monomorphism $G \to S$,
where $G$ is a non-trivial group. This is an importation notion in the theory of regular languages and transducers.
For example, it very often coincides with \emph{first-order definability} (e.g. \cite{schutzenberger1965finite}).
Thanks to \cref{def:M-group}, we can extend the definition of aperiodicity to arbitrary $M$-algebras 
(for $M$s that are monads and comonads, and are equipped with the $\putf$). Studying this new notion 
of \emph{generalized aperiodicity} could be an interesting research direction.

\item [3. Krohn-Rhodes decompositons.] The Krohn-Rhodes decomposition theorem (\cite[Theorem~(ii)]{krohn1965algebraic}) shows how to present
every semigroup with wreath products of groups and a $3$-element monoid called 
\emph{flip-flop monoid}. Its original proof starts by decomposing Mealy machines, 
and then it shows how to decompose monoids. 
Since in our paper, we generalize the definitions of a group, a wreath product, and a Mealy machine, 
we believe that there is potential for generalizing the original Krohn-Rhodes theorem to $M$-algebras.

\item [4. Other categories.] A natural follow-up of this paper would be generalizing it from 
the category \Set \ to arbitrary Cartesian closed categories. For now the biggest obstacle to such a 
generalization seems to be the $\strength$ function from \cref{def:strength}. See \cref{subsec:ccc} for more details. 
\end{description}
\clearpage
\bibliography{moconads}
\clearpage 
\appendix

\section{Omitted details from \cref{sec:monads-and-languages}}
\subsection{Monad axioms} \label{subsec:monad-axioms}
Let us present the omitted axioms of a monad: 
First, both $\eta$ and $\mu$ should be natural. In this particular case, this means
 that the following two diagrams should commute\footnote{
    We hope that the notation of commutative diagrams is self-explanatory. 
    See \cite[Section~1.1.3]{abramsky2011introduction} for a formal description.
 }: for every function $f : X \to Y$ (for the general
 definition of naturality, see \cite[Definition~1.4.1]{abramsky2011introduction}
 or \cite[Section~10]{milewski2018category}):

\[\begin{tikzcd}
	{M \, M \, X} &&& {M\, X} && X &&& {M\, X} \\
	\\
	{M \, M \, Y} &&& {M \, Y} && Y &&& {M \, Y}
	\arrow["{\mu_X}", from=1-1, to=1-4]
	\arrow["{M (M\,  f)}"', from=1-1, to=3-1]
	\arrow["{M \, f}", from=1-4, to=3-4]
	\arrow["{\mu_Y}", from=3-1, to=3-4]
	\arrow["f", from=1-6, to=3-6]
	\arrow["{M\, f}", from=1-9, to=3-9]
	\arrow["{\eta_X}", from=1-6, to=1-9]
	\arrow["{\eta_Y}", from=3-6, to=3-9]
\end{tikzcd}\]

In addition to being natural, $\eta$ and $\mu$ should make the following diagrams commute:
\[\begin{tikzcd}
	{M\, M \, M \, X} &&& {M\, M\, X} && {M\, X} && {M \, M \, X} \\
	\\
	{M \, M \, X} &&& {M \, X} && {M\, M\, X} && {M\, X}
	\arrow["{\mu_{MX}}", from=1-1, to=1-4]
	\arrow["{\mu_X}", from=1-4, to=3-4]
	\arrow["{\mu_X}", from=3-1, to=3-4]
	\arrow["{M \mu_X}"', from=1-1, to=3-1]
	\arrow["{\eta_{M X}}"', from=1-6, to=1-8]
	\arrow["{\mu _X}"', from=1-8, to=3-8]
	\arrow["{\textrm{id}}"{description}, from=1-6, to=3-8]
	\arrow["{M\, \eta_X}"', shift left, from=1-6, to=3-6]
	\arrow["{\mu_X}", from=3-6, to=3-8]
\end{tikzcd}\]

\section{Omitted details from \cref{sec:comonads-and-transdcuers}}
\subsection{Comonad axioms}\label{subsec:comonad-axioms}
Let us present the omitted axioms of a comonad.
First, both $\counit$ and $\delta$ have to be natural, i.e. for every $f : X \to Y$ they have to satisfy the following 
        commutative equations:
    \[\begin{tikzcd}
	{M\, X} &&& {M\, M \, X} && {M\, X} &&& X \\
	\\
	{M \, Y} &&& {M\, M\, Y} && {M \, Y} &&& Y
	\arrow["Mf", from=1-1, to=3-1]
	\arrow["{M (M f)}", from=1-4, to=3-4]
	\arrow["{\delta_X}", from=1-1, to=1-4]
	\arrow["{\delta_Y}", from=3-1, to=3-4]
	\arrow["{M\, f}", from=1-6, to=3-6]
	\arrow["{\varepsilon_X}", from=1-6, to=1-9]
	\arrow["{\varepsilon_Y}", from=3-6, to=3-9]
	\arrow["f", from=1-9, to=3-9]
    \end{tikzcd}\]

    In addition to being natural, $\delta$ and $\varepsilon$ should make the following diagrams commute:
\[\begin{tikzcd}
	{M\, X} && {M\, M\, X} && {M \, X} && {M \, M \, X} \\
	\\
	{M\, M\, X} && {M \, M \, M\, X} && {M\, M\, X} && {M \, X}
	\arrow["{\delta_X}"', from=1-1, to=3-1]
	\arrow["{\delta_X}", from=1-1, to=1-3]
	\arrow["{\delta_{M X}}", from=1-3, to=3-3]
	\arrow["{M \, \delta_X}"', from=3-1, to=3-3]
	\arrow["{\delta_{X}}", from=1-5, to=1-7]
	\arrow["{\delta_X}", from=1-5, to=3-5]
	\arrow["{M\, \varepsilon_X}", from=3-5, to=3-7]
	\arrow["{\varepsilon_{MX}}", from=1-7, to=3-7]
	\arrow["{\textrm{id}}"{description}, from=1-5, to=3-7]
\end{tikzcd}\]

\subsection{Mealy machines}\label{subsec:mealy}
Mealy machines are one of the most basic, and very well-studied models of transducers. 
They were introduced by \cite{mealy1955method}. In this section, we give a full definition of Mealy machines,
and show that they are equivalent to $\overrightarrow{L}$-definable transductions. 
\begin{definition}
    Let $\Sigma$ and $\Gamma$ be finite alphabets. A \emph{Mealy machine} of type $\Sigma^+ \to\Gamma^+$ consists~of: 
    \begin{enumerate}
        \item a finite set of states $Q$;
        \item an initial state $q_0 \in Q$; and
        \item a transition function: 
            \[ \delta : \underbrace{Q}_{\textrm{current state}} \times \underbrace{\Sigma}_{\textrm{input letter}} \to
                        \underbrace{Q}_{\textrm{new state}} \times \underbrace{\Gamma}_{\textrm{output letter}} \]
    \end{enumerate}
    (Observe that contrary to a deterministic Mealy machine does not have accepting states.)

    A Mealy machine defines the following function\footnote{
    Usually the type of a Mealy machine's function is defined as $\Sigma^* \to \Gamma^*$. 
    However, this does not make much difference. This is because Mealy machines are a length preserving model, 
    so for the empty input they always return the empty output, and for a non-empty input they always 
    return a non-empty output. It follows that the function $\Sigma^* \to \Gamma^*$ is uniquely defined 
    by the function $\Sigma^+ \to \Gamma^+$. 
    } $\Sigma^+ \rightarrow \Gamma^+$.
    It starts in the initial state, and processes the input word letter by letter. 
    For each symbol $w_i$, the machine transitions to a new state $q'$ and outputs a symbol $x \in \Gamma$, 
    as directed by the transition function $\delta(q, w_i) = (q', x)$. When the machine has processed it entire input, 
    the output word is obtained as the sequence of all letters from $\Gamma$ calculated by the machine. 
\end{definition} 
For example, here is a Mealy machine of type $\{a, b\}^+ \to \{c, d\}^+$ that calculates the function
``Change the first $a$ to $c$, and all other letters to $d$'':
\smallpicc{mealy-ex}

\begin{lemma}\label{lem:mealy-overrightarrow-eq}
    The class of transductions recognized by Mealy machines is equivalent to $\overrightarrow{L}$-recognizable 
    transductions.
\end{lemma}
\begin{proof}
    $\subseteq$: As explained in  \cref{subsec:word-to-word-transducer-functors},
    every $\overrightarrow{L}$-recognizable transduction is given by 
    a semigroup $S$, an input function $h : \Sigma \to S$ and an output function $S \to \Gamma$. 
    Such a transduction computes the following function:
    \[ \begin{tabular}{ccc}
        $\underbrace{a_1\ a_2\ \ldots\ a_n}_{\Sigma^+} $ & $\mapsto$ & $\underbrace{\lambda\left( h(a_1)\right),\  \lambda\left(h(a_1) \cdot h(a_2)\right),\ \ldots,\  \lambda\left(h(a_1) \cdot \ldots \cdot h(a_n)\right)}_{\Gamma^+}$
    \end{tabular}
    \]  
    Let us show how to translate such a transduction into a Mealy machine: We start by extending the semigroup $S$ to 
    a monoid $S^I = S \cup \{1\}$, where $1$ the formal identity element (i.e. $1 \cdot x = x \cdot 1 = x$ for every $x \in S^I$).
    Now we say the set of states of the Mealy machine is equal to $S'$, its initial state is $1$, and its transition function 
    is given by the following formula:
    \[ \delta(s, a) = (s \cdot h(a), \lambda(s\cdot h(a)))\]
    This way the Mealy machine computes the $S$-products of the $h$-values of the input prefixes, 
    and outputs their $\lambda$-values, computing the same function as the original $\overrightarrow{L}$-recognizable transduction.

    $\supseteq$: Now, we are given a Mealy machine of type $\Sigma^+ \to \Gamma^+$ given by $(Q, q_0, \delta)$, 
    and we want to construct $(S, h, \lambda)$, such that the $\overrightarrow{L}$-transduction given by 
    $(S, h, \lambda)$ is equivalent to the initial Mealy machine.
    For this purpose let us define the behaviour of an infix $w \in \Sigma^+$, which 
    is an element from the following set:
    \[ S = \underbrace{Q}_{\substack{
            \textrm{The state in which}\\
            \textrm{the Mealy machine enters}\\
            \textrm{the infix from the left}
    }} \to \underbrace{Q}_{\substack{
        \textrm{The state in which}\\
        \textrm{the Mealy machine exists}\\
        \textrm{the infix from the right} 
    }} \times \underbrace{\Gamma}_{\substack{
        \textrm{The letter that}\\
        \textrm{the Mealy machine outputs}\\
        \textrm{while exiting the infix}
    }} \]
    Observe that those behaviours are compositional, if we know that the behaviours of words 
    $w, v \in \Sigma^+$ are equal respectively to $f_w$ and $f_v$, then we know that the behaviour 
    of $wv$ is equal to the following function:
    \[ f_{wv}(q) = f_v(\pi_1(f_w(q)))\comma\]
    where $\pi_1 : X \times Y \to X$ represents the projection to the first coordinate.
    This gives us a semigroup structure on the set of behaviours (where $f \cdot g$ is defined 
    as $g \circ \pi_1 \circ f$). We take this monoid of behaviours as our semigroup $S$. 
    Then, we define $h: \Sigma \to S$ and $\lambda : S \to \Gamma$ in the following way:
    \[\begin{tabular}{cc}
        $h(a) = q \mapsto \delta(q, a)$ & $\lambda(f) = \pi_2(f(q_0))$
    \end{tabular}
    \]
    This way, the $\overrightarrow{L}$ transduction given by $(S, h, \lambda)$ computes the 
    behaviour of each prefix, and outputs the letter that the machine would output if it entered 
    the prefix in $q_0$. It is not hard to see that this computes the same function as the original Mealy machine. 
\end{proof}

\subsection{Rational length-preserving transductions}\label{subsec:rational-lenght-preserving}
In this section, we present the definition of \emph{rational length-preserving} transductions.
We define them using \emph{unambiguous Mealy machines}\footnote{Although, we could not find 
a reference to this exact model in the literature, we believe that it belongs to the field's
folklore, as it can be seen as a length-preserving version of the functional NFA with output 
(see \cite[Chapter~25]{eilenberg1974automata}, or \cite[Section~13.2]{bojanczyk2018automata}).}
and show that they are equivalent to $\bar{L}$-definable transductions.
\begin{definition}
    Let $\Sigma$ and $\Gamma$ be finite alphabets.
    A \emph{nondeterministic Mealy machine} of type $\Sigma^+ \to \Gamma^+$ consists of:
    \begin{enumerate}
        \item A finite set of states $Q$. 
        \item A subset $I \subseteq Q$ of initial states, and a subset $F \subseteq Q$ of final (i.e. accepting) states. 
        \item A transition relation:
        \[ \delta : \underbrace{Q}_{\textrm{current state}} \times \underbrace{\Sigma}_{\textrm{input letter}} \times
        \underbrace{Q}_{\textrm{new state}} \times \underbrace{\Gamma}_{\textrm{output letter}} \]
    \end{enumerate}
    A \emph{run} of a nondeterministic Mealy machine over an input word $w \in \Sigma^+$ is a sequence of states,
    starting from an initial state $q_0 \in I$, and ending in a final state $q_n \in F$, 
    such that for each symbol $w_i$ of $w$,
    there is a transition in $\delta$ that reads $w_i$ and takes the machine from state $q_{i-1}$ to state $q_i$.
    Observe that each run produces an output word in $\Gamma^+$.
    The machine is called \emph{unambiguous} if, for every input word $w \in \Sigma^+$, there exists \emph{exactly} one run.
    The transduction defined by an unambiguous Mealy machine is a function $\Sigma^+ \to \Gamma^+$ that maps a word $w \in \Sigma^+$ 
    to the output of the machine's only run for $w$. (Observe that the unambiguity of the machine guarantees that for every input, 
    there is exactly one output, despite the machine's nondeterministic transition relation.)
\end{definition}
For example, here is an unambiguous Mealy machine of type $\{a, b\}^+ \to \{a, b\}^+$ that computes the function
``replace the first letter with the last one'':
\custompicc{Rational-ex}{0.4}

Observe now that there is a slight type mismatch between the types of rational length-preserving
transductions and $\overline{L}$-definable transductions. The former are of the type $\Sigma^+ \to \Gamma^+$, and the latter 
of type $\bar{L}(\Sigma) \to \bar{L}(\Gamma)$. To deal with this mismatch, we notice that
all $\overline{L}$-definable transductions satisfy the following property
(this is an immediate consequence of the definition of $\bar{L}$-definable transductions):
\begin{definition}
    We say that a length-preserving function $f : \bar{L} X \to \bar{L} Y$ is \emph{underline-independent} if:
    \begin{enumerate}
        \item For every $v \in \bar{L}$, the underlying word of $f(v)$ does not depend on the position of the underline in $w$, i.e. 
              for every $w \in X^+$:
              \[  \mathtt{forget}(f( \mathtt{underline}_i(w) )) = \mathtt{forget}(f( \mathtt{underline}_j(w) ))\comma\]
              where $\mathtt{underline}_i$ is the function that underlines the $i$th element of the input, and $\mathtt{forget}$ is 
              the function that casts a pointed list into a normal list (by erasing the underline).
        \item For every $v \in \bar{L}$ the index of the underlined position in $v$ is equal to the index of the underlined position 
              in $f(v)$. 
    \end{enumerate}
\end{definition}
Observe now that the following function is a bijection between length-preserving, underline-independent functions $\bar{L}(\Sigma) \to \bar{L}(\Gamma)$ and
length-preserving functions $\Sigma^+ \to \Gamma^+$:
\[
    \phi(f) = \mathtt{forget} \circ f \circ \mathtt{underline}_1
\]
From now on, we are going to use this bijection implicitly, equating the two types of functions.

Next, let us show that unambiguous Mealy machines are equivalent to $\bar{L}$-definable transductions. 
We start with the proof of \cref{lem:pointed-alg-prod-two-mon-prod}
(the lemma and its proof are based on \cite[Section~8.1]{bojanczyk2015recognisableFull}):
\begin{replemma}{\ref{lem:pointed-alg-prod-two-mon-prod}}
        For every $\bar{L}$-algebra $(A, \prod)$, there are two monoids $M_L$ and $M_R$, 
        together with functions $h_L : A \to M_L$, $h_R : A \to M_R$, such that the value of 
        every $A$-product 
        \[\prod([a_1, \ldots, \underline{a_i}, \ldots, a_n])\]
        depends only on: 
        \begin{enumerate}
        \item the $M_L$-product of the prefix (i.e. $h_L(a_1) \cdot \ldots \cdot h_L(a_{i-1})$), 
        \item the $M_R$-product of the suffix (i.e. $h_R(a_{i+1}) \cdot \ldots \cdot h_R(a_{n})$), and
        \item the exact $A$-value of the underlined element ($a_i$).
        \end{enumerate}
        Moreover, if $A$ is finite then both $M_L$ and $M_R$ are finite as well. 
\end{replemma}
\begin{proof}
      For every element $a \in A$, we define its \emph{left transformation} to be the following function of type $A \to A$:
      \[ x \mapsto \prod([a, \underline{x}])\]
      Observe that the set of all left transformations equipped with function composition forms a monoid. 
      This is because, thanks to the associativity axiom, the left behaviour of $\prod{[x, \underline{y}]}$ 
      is equal to the composition of left behaviours of $x$ and $y$. We define $M_L$ to be the monoid of left 
      transformations, and $h_L$ to be the function that maps elements of $A$ to their left behaviours. 
      Values $M_R$ and $h_L$ are defined analogously, but for \emph{right behaviours}.
      Observe that the value
      \[ \prod[a_1, \ldots, \underline{a_i}, \ldots, a_n] \]
      can be computed as $s(p(a_i))$, where $p$ is the $M_L$ product of the prefix, and $s$ is the $M_R$ 
      product of the suffix (as defined in the statement).  
\end{proof}

We are now ready to prove the equivalence of nondeterministic Mealy-machines and $\bar{L}$-definable transductions:
\begin{lemma}
    The class of transductions computed by unambiguous Mealy machines is equal to $\bar{L}$-definable transductions. 
\end{lemma}
\begin{proof}~

    $\subseteq$: A $\bar{L}$ definable transduction is given by a $\bar{L}$-algebra $(A, \alpha)$, 
    an input function $h : \Sigma \to S$ and an output function $\lambda : A \to \Gamma$.
    The $i$-th letter of the output is then computed as $\lambda(\alpha([h(w_1), \ldots, \underline{h(w_i)}, \ldots, w_n]])$.
    By \cref{lem:pointed-alg-prod-two-mon-prod}, we know that there are two monoids $M_R$, $M_L$, 
    and functions $h_R$, $h_L$, $g$, such that this can be computed as $g(p_L, h(w_i),s_R)$, 
    where $p_L = h'_L(w_1) \cdot \ldots s_R = h'_L(w_{i-1})$ and  $p_R = h'_R(w_{i+1}) \cdot \ldots \cdot h'_R(w_n)$, 
    for $h'_L = h_L \circ h$ and $h'_R = h_R \circ h$.  
    Based on this observation we can define an unambiguous Mealy machine. Intuitively
    the machine is going to remember in its state the $M_L$-product of the prefix and the $M_R$-product of the suffix, 
    and it is going to use $g$ to compute the output letters. Formally, the machine's set of states is equal to $M_L \times M_R$,
    its initial states are $\{1\} \times M_R$ and its final states are $M_L \times \{1\}$. Finally, its transition function 
    consists of the following tuples, for every $m_L \in M_L$, $m_R \in M_R$, and $x \in \Sigma$: 
    \[ (\underbrace{(m_L,\,  m_R \cdot h'_R(x))}_{\textrm{previous state}},
    \underbrace{x}_{\textrm{input letter}},
    \underbrace{(m_L \cdot h'_L(x),\, m_R)}_{\textrm{next state}},\
    \underbrace{g(m_L,\, h(x),\, m_R)}_{\textrm{output letter}}) \]
    Thanks to this definition, we know that the only correct run for an input $w \in \Sigma^*$ 
    is the run that correctly evaluates the  monoid products for all prefixes and suffixes.
    It follows that the machine is unambiguous, and correctly computes the output of the original $\bar{L}$-transduction.

    $\supseteq$: Now, we are given an unambiguous Mealy machine $\Sigma^+ \to \Gamma^+$ defined by some $(Q, I, F, \delta)$,
    and we show how to transform it into a $\bar{L}$-definable transduction. We start by defining the transition semigroup
    for the Mealy machine.
    It consists of behaviours,
    which are analogous to the deterministic behaviours from \cref{lem:mealy-overrightarrow-eq}, but are relations instead of functions, 
    and they ignore the output letter (in other words, it is the transition monoid for the underlying NFA, as defined e.g. in \cite[IV.3.2]{pin2010mathematical}):
    \[ M = \underbrace{Q}_{\substack{
        \textrm{The state in which}\\
        \textrm{the Mealy machine enters}\\
        \textrm{the infix from the left}
}} \times \underbrace{Q}_{\substack{
    \textrm{The state in which}\\
    \textrm{the Mealy machine exists}\\
    \textrm{the infix from the right} 
}}\]
The product operation in $M$ is simply the composition of relations: $f \cdot g = g \circ f$. Let us now use $M$ 
to define the $\bar{L}$-algebra $(A, \alpha)$. We start with the underlying set $A = M \times \Sigma \times M$. 
Before we define the product operation, let us show how to cast element of $A$ to $M$:
\[ t(m_1, a, m_2) = m_1 \cdot \delta(a) \cdot m_2 \comma \]
where $\delta(a) : Q \times Q$ is the partial application of the transition relation
(which computes the behaviour for the single letter $a$). We are now ready to define the product:
\[ \begin{tabular}{ccc} 
    $\alpha( [ (p_1, a_1, s_1), \ldots, \underline {(p_i, a_i, s_i)}, \ldots (p_n, a_n, s_n)])$\\
    $=$\\
    $\left(t(p_1, a_1, s_1) \cdot \ldots \cdot t(p_{i-1}, a_{i-1}, s_{i-1}) \cdot p_i, \ \ a_i,\ \ s_i \cdot t(p_{i+1}, a_{i+1}, s_{i+1}) \cdot t(p_n, a_n, s_n)\right)$
\end{tabular}
\]
It is not hard that this $\alpha$ satisfies the algebra axioms
(the singleton-mult axiom is straightforward, and the associativity axiom follows from the associativity of $M$).
Next, we define $h : \Sigma \to A$ as $h(a) = (1, a, 1)$.
Finally, we define $\lambda : A \to \Gamma$, in the following way: $\lambda(p, a, s) = b$, if there is a transition $q \transform{a/b} q'$,
an initial state $q_0$ and a final state $q_n$ such that $(q_0, q) \in p$ and $(q', q_n) \in s$. Thanks to the unambiguity
of the Mealy machine, we know that $\lambda$ is well-defined. We finish the proof by noting that the $\bar{L}$-transduction 
defined by $(A, \alpha)$, $h$ and $\lambda$ is by design equivalent to the original Mealy machine.     
\end{proof}

\section{Omitted details from \cref{sec:compositions-of-recognisable-transductions}}
\subsection{Naturality of $\putf$}\label{subsec:put-naturality}
The naturality of $\putf$ means that for every function $f : X \to Y$, the following diagram commutes (for the general
definition of naturality, see \cite[Definition~1.4.1]{abramsky2011introduction}
or \cite[Section~10]{milewski2018category}):
\[\begin{tikzcd}
	{M X \times X} && {M X} \\
	\\
	{M Y \times Y } && {M Y}
	\arrow["{\mathtt{put}}"{description}, from=1-1, to=1-3]
	\arrow["{(M f) \times f}"{description}, from=1-1, to=3-1]
	\arrow["{\mathtt{put}}"{description}, from=3-1, to=3-3]
	\arrow["{M f}"{description}, from=1-3, to=3-3]
\end{tikzcd}\]

\subsection{The flatten-expand axiom}\label{subsec:flatten-expand-ex}
In this section we give a step-by-step example of the flatten-expand axiom for $M = \overrightarrow{L}$, 
we start by restating the axiom:
\[\begin{tikzcd}
	& {M M M A} && {M M M A} \\
	\\
	{M M A} &&&& {M M A } \\
	&& {M A}
	\arrow["{\mu_A}"{description}, from=3-1, to=4-3]
	\arrow["{\delta_A}"{description}, from=4-3, to=3-5]
	\arrow["{\mu_{M A}}"{description}, from=1-4, to=3-5]
	\arrow["{\delta_{M A}}"{description}, from=3-1, to=1-2]
	\arrow["{M \mathtt{work}}", from=1-2, to=1-4]
\end{tikzcd}\]
where $\mathtt{work}$ is defined as the following composition:
\[ M M A \transform{\langle \id, \counit \rangle } M M A \times M A \transform{\id \times \delta} M M A \times M M A \transform{\mathtt{strength}} M (M M A \times M A)
\transform{M \putf} M M M A \transform { M \mu} M A \]

For the purpose of our example, let us consider $A = \{1, \ldots, 7\}$, and let us 
consider the following input $[[1, 2], [3, 4], [5, 6, 7]] \in \PL \PL A$.
The bottom path works as follows:
\[  \begin{tabular}{c}
    $[[1, 2], [3, 4], [5, 6, 7]]$ \\
    $\downmapsto\, \mu$\\
    $[1, 2, 3, 4, 5, 6]$\\
    $\downmapsto \, \delta$\\
    $[[1], [1, 2], [1, 2, 3], [1, 2, 3, 4], [1, 2, 3, 4, 5], [1, 2, 3, 4, 5, 6], [1, 2, 3, 4, 5, 6, 7]]$
\end{tabular}\]
Now let us focus on the top path. Here is the first step:
\[
    [[1, 2], [3, 4], [5, 6, 7]] \transformv{\delta } \left[[[1,2]],\ [[1, 2],\, [3, 4]],\ [[1, 2],\, [3, 4],\, [5, 6, 7]]\right]
\] 
The next step in the top path is $M\, \mathtt{work}$ which applies the work function in parallel 
to all elements of the top list. Let us show, how it works on the last element, i.e. on $[[1, 2], [3, 4], [5, 6, 7]]$: 
\[ \begin{tabular}{c}
    $\Big[[1, 2], [3, 4], [5, 6, 7]\Big]$\\
    $\downmapsto\, \langle id, \counit\rangle$\\
    $\left(\Big[[1, 2], [3, 4], [5, 6, 7]\Big], [5, 6, 7] \right)$\\
    $\downmapsto\, id \times \delta$ \\
    $\left(\Big[[1, 2], [3, 4], [5, 6, 7]\Big], \Big[[5], [5, 6], [5, 6, 7]\Big] \right)$\\
    $\downmapsto\, \strength$\\ 
    $ \left[ \biggl( \bigl[[1, 2], [3, 4], [5, 6, 7]\bigr], [5] \biggr),
             \biggl( \bigl[[1, 2], [3, 4], [5, 6, 7]\bigr], [5, 6] \biggr),
             \biggl( \bigl[[1, 2], [3, 4], [5, 6, 7]\bigr], [5, 6, 7] \biggr)
      \right] $\\
    $\downmapsto\, M \putf$ \\
    $\biggl[  \Bigl[[1, 2], [3, 4], [5]\Bigr], 
            \Bigl[[1, 2], [3, 4], [5, 6]\Bigr], 
            \Bigl[[1, 2], [3, 4], [5, 6, 7]\Bigr]
    \biggr]$\\
    $\downmapsto\, M \mu$ \\
    $\Bigl[ [1, 2, 3, 4, 5], [1, 2, 3, 4, 5, 6], [1, 2, 3, 4, 5, 6, 7]\Bigr]$
\end{tabular}
\]
In a similar fashion one can compute $\mathtt{work}$ for the other sublists:
\[ 
    \begin{tabular}{ccc}
    $[[1, 2]]$ & $\transformv{\mathtt{work}}$ & $[[1], [1, 2]]$\\
    $[[1, 2], [3, 4]]$ & $\transformv{\mathtt{work}}$ & $[[1, 2, 3], [1, 2, 3, 4]]$\\
    $[[1, 2], [3, 4], [5, 6, 7]]$ & $\transformv{\mathtt{work}}$ & $[ [1, 2, 3, 4, 5], [1, 2, 3, 4, 5, 6], [1, 2, 3, 4, 5, 6, 7]]$
    \end{tabular}
\]
Using those results of $\mathtt{work}$, we can trace the top path of the diagram:
\[ 
    \begin{tabular}{c}
    $\Bigl[[1, 2], [3, 4], [5, 6, 7]\Bigr]$\\ 
    $\downmapsto\, \delta$\\
    $\Bigl[ \bigl[[1,2]\bigr], \bigl[[1, 2], [3, 4]\bigr], \bigl[[1, 2], [3, 4], [5, 6, 7]\bigr] \Bigr]$\\
    $\downmapsto\, M \mathtt{work}$\\
    $\Bigl[ \bigl[[1],[1, 2] \bigr],\ \bigl[[1, 2, 3],[1, 2, 3, 4]\bigr],\ \bigl[[1, 2, 3, 4, 5], [1, 2, 3, 4, 5, 6], [1, 2, 3, 4, 5, 6, 7]\bigr]\Bigr]$\\
    $\downmapsto\, \mu$\\
    $\Bigl[[1], [1, 2], [1, 2, 3], [1, 2, 3, 4], [1, 2, 3, 4, 5], [1, 2, 3, 4, 5, 6], [1, 2, 3, 4, 5, 6, 7]\Bigr]$  
    \end{tabular}
\]
Which means that both in the top and in the bottom path, we obtain the same result. 

\subsection{Flatten-expand axiom as bialgebra}\label{subsec:flatten-expand-as-bialgebras}
In this section, we show how to describe the flatten-expand axiom in the language of \emph{bialgebras}. 
We start with the definition of \emph{coalgebras} for a comonad
(which is the dual of \cref{def:em-algabra}):
\begin{definition}
    A \emph{coalgebra} for a comonad $W$ is a set $S$ together 
    with a multiplication function $\beta : S \to W S$, that makes the
    following diagrams commute:
\[\begin{tikzcd}
	S && {WW S} & S && {W S } \\
	\\
	WS && {WW S} &&& {S }
	\arrow["\beta"{description}, from=1-1, to=1-3]
	\arrow["\beta"{description}, from=1-1, to=3-1]
	\arrow["{W \beta}"{description}, from=1-3, to=3-3]
	\arrow["\delta"{description}, from=3-1, to=3-3]
	\arrow["\beta"{description}, from=1-4, to=1-6]
	\arrow["\varepsilon"{description}, from=1-6, to=3-6]
	\arrow["id"{description}, from=1-4, to=3-6]
\end{tikzcd}\]
\end{definition}
We are now ready to present the definition of a bialgebra\footnote{
    Defined in \cite[Section~7.2]{turi1997towards}. 
    See also \cite{klin2011bialgebras}.
}:
\begin{definition}
    Let $M$ be a monad, and $W$ be a comonad. A $(M, W)$-bialgebra is a set 
    $S$ equipped with three functions:
    \[ 
        \begin{tabular}{ccc}
        $\alpha : M S \to S$ & $\beta : S \to M S$ & $\gamma : M W S \to W M S$
        \end{tabular}
    \]
    Such that $(S, \alpha)$ is an $M$-algebra, $(S, \beta)$ is a $W$-coalgebra, and the following diagram commutes:
    \[\begin{tikzcd}
	& {M W S} && {W M S} \\
	{M S} &&&& {W S} \\
	&& S
	\arrow["\alpha"{description}, from=2-1, to=3-3]
	\arrow["\beta"{description}, from=3-3, to=2-5]
	\arrow["{M \beta}"{description}, from=2-1, to=1-2]
	\arrow["{M \alpha}"{description}, from=1-4, to=2-5]
	\arrow["\gamma"{description}, from=1-2, to=1-4]
    \end{tikzcd}\]
\end{definition}
We are going to be intereseted in the case where the monad and the comonad is the same functor $M$.

Observe now, that for every set $X$, the set $M X$ equipped with the $\mu$ operation forms an $M$-algebra, 
called the \emph{free algebra} over $X$. (In this case the axioms of an algebra coincide with the axioms of a monad).
Similarly, the set $M X$ equipped with the $\delta$ operation, forms an $M$-coalgebra, called \emph{the free coalgebra}.

We are now ready to specify the flatten-expand axiom in terms of a bialgebra. It states that for
every $X$, the set $M X$ equipped with $\mu$ (i.e. the free algebra structure), $\delta$ (i.e. the free coalgebra structure),
and $\gamma$ defined as the following composition, is a bialgebra:
\[ \gamma : M M M X \transform{\delta} M M M M X \transform{M\langle M \counit, \counit \rangle} M (M M X \times M M X) \transform{M \strength} M M (M M X \times M X) \transform{M M \putf} \]
\[  \transform{M M \putf}  M M M M X \transform{\mu} M M M X\] 
After unfolding the bialgebra definition, this means that the following diagram commutes:
\[\begin{tikzcd}
	& {M M M X} && {M M M X} \\
	{M M X} &&&& {M M X} \\
	&& {M X}
	\arrow["\mu"{description}, from=2-1, to=3-3]
	\arrow["\delta"{description}, from=3-3, to=2-5]
	\arrow["{M \delta}"{description}, from=2-1, to=1-2]
	\arrow["{M \mu}"{description}, from=1-4, to=2-5]
	\arrow["\gamma"{description}, from=1-2, to=1-4]
\end{tikzcd}\]
Using basic equational reasoning, we can show that the top path in this diagram is equal to the 
top path in the flatten-expand axiom. (This is formalized as \texttt{flattenExpandAltThm}, see \cref{subsec:flatten-expand-bialg-coq}.)
It follows that the bialgebraic formulation is equivalent to the flatten-expand axiom.

\subsection{Omitted proofs from \cref{subsec:ctx}}
\begin{replemma}{\ref{lem:contexts-compose}}
    For every $f, g \in C_A$, it holds that $f \circ g \in C_A$. 
\end{replemma}
\begin{proof}
We start the proof by defining the following operation $\concat : M A \times M A \to M A$: 
\[ M A \times M A \transform{ M \eta_A \times \id } M M A \times M A \transform{\putf_{M A}} M M A \transform{\mu_A} M A\]
The intuition behind this operation is that it overrides the focused element, with the given element of $M A$.

For more intuition, consider the following example in $\overrightarrow{L}$
 and observe that the element $3$ disappears:
\[ \concat([1, 2, 3], [4, 5, 6]) = [1, 2, 4, 5, 6] \]
In the setting of lists this \emph{overriding} behaviour might seem counter-intuitive,
as there already is a more natural definition of concatenation. However, the overriding 
behaviour is clearly defined for all $M$s, and the usual definition of concatenation 
does not seem to be generalizable (for example for $M$'s such as $\bar{L}$ and $T_\mathcal{S}$).
For example, in $\bar{L}$, $\concat$ works as follows:
\[\concat([1, \underline 2, 3], [4, 5, \underline 6] ) = [1, 4, 5, \underline{6}, 2, 3] \]
Using the put-assoc axiom, one can show the context of concatenation is equal to the composition of contexts
(this is verified as \texttt{concatCtx} in Coq, see \cref{subsec:coq-contexts}):
\begin{lemma}\label{lem:contexts-concat}
    For every $k, l \in M A$, it holds that:
    \[ \ctx_k \circ \ctx_l = \ctx_{\concat(k, l)} \]
\end{lemma}
This finishes the proof of the \cref{lem:contexts-compose}.
\end{proof}

\subsection{Wreath Product}\label{subsec:wreath-product}

In this section, we show how to compose two $\overrightarrow{L}$-transductions using the usual definition
of a \emph{wreath product} \cite[Defnition~1.7]{krohn1965algebraic}. This serves two purposes: the first one
is to relate the \emph{generalized wreath product} from \cref{subsec:composition-theorem} with the classical wreath product, 
the second one is to give more intuitions about the proof of \cref{thm:compose}.

Remember that a $\overrightarrow{L}$-transduction of type $\Sigma^+ \to \Gamma^+$
is given by a semigroup $S$, and functions $h : \Sigma \to S$, $\lambda : S \to \Gamma$, 
and is computed according to the following formula:
\[ w_1 \ldots w_n \to \lambda(h(w_1)),\ \ldots,\ \lambda(h(w_1) \cdot \ldots \cdot h(w_n))\]

We are given two $\overrightarrow{L}$-transductions $F_1 : \Sigma^* \to \Gamma^*$ and $F_2 : \Gamma^* \to \Delta^*$, 
given by $(S_1, h_1, \lambda_1)$ and $(S_2, h_2, \lambda_2)$, and we would like to construct
$(S_3, h_3, \lambda_3)$, such that their $\overrightarrow{L}$-transduction computes the
composition $\Sigma^* \transform{F_1} \Gamma^* \transform{F_2} \Delta^*$. As mentioned before, 
for $S_3$, we are going to use the wreath product of $S_1$ and $S_2$. 

The intuition behind $S_3$ is that it represents the $S_1$- and $S_2$-products for every possible infix. 
Before we define $S_3$, let us show what it means to compute the $S_2$-product of an infix from $\Sigma^*$. 
For example, consider the following infix $w \in \Sigma^*$:
\[ \begin{tabular}{cccccc}
    \fbox{\textrm{unknown preffix}} & $a_1$ & $a_2$ & $a_3$ & $a_4$ &  \fbox{\textrm{unknown suffix}} 
\end{tabular}
\]
In order to compute the $S_2$-product of $w$, we need to first transform it using $F_1$.
This is slightly problematic, as the output of $F_1$ will usually depend on the unknown prefix that comes before $w$.
However, the only information we need about that prefix is its $S_1$-product. 
For example, if we know that $S_1$-product of the prefix is equal to $s$, we can compute the output 
of $F_1$ on $w$ as follows (where $s_i := h_1(a_i)$):
\[ \begin{tabular}{cccccc}
    \fbox{\textrm{unknown preffix}} & $\lambda_1(s \cdot s_1)$ & $\lambda_1(s \cdot s_1 \cdot s_2)$ & $\lambda_1(s \cdot s_1 \cdot s_2 \cdot s_3)$  & $\lambda_1(s \cdot s_1 \cdot s_2 \cdot s_3 \cdot s_4)$ &  \fbox{\textrm{unknown suffix}} 
\end{tabular}
\]
Now we can compute the $S_2$-product of $w$ by applying $h_2$ to each letter and multiplying the results (in $S_2$):
\[ h_2(\lambda_1(s \cdot s_1)) \cdot h_2(\lambda_1(s \cdot s_1 \cdot s_2)) \cdot   h_2(\lambda_1(s \cdot s_1 \cdot s_2 \cdot s_3)) \cdot  h_2(\lambda_1(s \cdot s_1 \cdot s_2 \cdot s_3 \cdot s_4))\]
According to this reasoning, the $S_2$ value of an infix can be represented as a function of the following type (where $S_1^1$ denotes $S_1$ adjoined with a formal identity element):
\[ \underbrace{S_1^1}_{ \substack{
    \textrm{Given the $S_1$-product of the preffix}\\
    \textrm{(where $1$ represents the empty preffix)}
   }} \to 
   \underbrace{S_2}_{\substack{
    \textrm{What is the $S_2$-product of the infix?}
   }}
\]
We are now ready to define $S_3$ as the following set:
\[ \underbrace{S_1}_{ \substack{
    \textrm{The $S_1$-value of the infix}
}} \times \underbrace{S_1^1 \to S_2}_{ \substack{
    \textrm{The $S_2$-value of the infix}\\
    \textrm{as explained above}
} } \]
The product operation on $S_3$ is defined with the following formula,
which follows from our definition of the $S_2$-values:
\[ (s, f) \cdot (t, h) = (s \cdot t,\ x \mapsto f(x) \cdot g(x \cdot s)) \]
It is now not hard to see that $S_3$ equipped with the following $h_3$ and $\lambda_3$ 
recognizes the composition $F_2 \circ F_1$: 
\[
    \begin{tabular}{l}
        $h(a) = \left(\; h_1(a),\ x \mapsto h_2\left(\lambda_1 \bigl(x \cdot h_1(a)\bigr)\right) \;\right)$\\
        $\lambda(x, f) = \lambda_2(f(1))$
    \end{tabular} 
\]

Let us finish this section by comparing this definition of a wreath product, with our definition 
of the \emph{generalized wreath product} from \cref{subsec:composition-theorem}, where it is defined as:
\[ S_1 \times (S_1^{S_1} \to S_2) \]
Remember that the set $S_1^{S_1}$ is meant to represent $S_1$-contexts, so we can think of this type as
$S_1 \times (C_{S_1} \to S_2)$. (Actually, this is how we could have defined the \emph{generalized wreath product} from the 
beginning. We used the more general definition for the sake of simplifying the definition of the product,
and the (formal) proof of associativity, but we believe that this finer definition would work as well).
Since, as explained in \cref{subsec:ctx}, for $M = \overrightarrow{L}$ the set of contexts $C_S$
is isomorphic to $S^1$, this can be further simplified to $S_1 \times (S_1^1 \to S_2)$,
which coincides with the definition of the \emph{wreath product} presented in this section.
    
\section{Omitted details from \cref{sec:futher-work}}
\subsection{General Cartesian closed categories}\label{subsec:ccc}
In this section we discuss possible strategies and obstacles for generalizing the results of this 
paper from \Set\ to arbitrary \emph{Cartesian closed categories}. 
We say that a category is called \emph{Cartesian closed} if it admits products
$X \times Y$ and function spaces $X \Rightarrow Y$ (see \cite[Definition~78]{abramsky2011introduction} for the full definition).

As it turns out, our \texttt{Coq}-formalization of the results mostly uses a subset of $\lambda$-calculus 
that can be automatically translated into morphism in every Cartesian closed category (see \cite[Section~1.6.5]{abramsky2011introduction}). 
The only exception is the function $\strength$, which is defined in the following way:
\[ \strength(x, l) = \left(M\,  (\lambda y . (x, y))\right)\  l \]
This causes problems, because in the Cartesian closed categories, the functor $M$ can only be applied to arrows of the category
(i.e. $X \to Y$) and not to the exponent objects (i.e. $X \Rightarrow Y$). (In particular, 
not every functor in every Cartesian closed category has a \emph{strength}.)

One way to deal with this problem is to require that the functor $M$ should come
equipped with a $\strength$ (smilarly to how it comes equipped with a $\putf$ function),
and axiomatize its expected behaviour.
We have tried this approach with the usual axioms of a \emph{strong functor}, 
\emph{strong monad}, and a \emph{strong comonad}\footnote{We took the axioms of a strong functor and a strong monad from \cite[Definition 3.2]{moggi1991notions}.
For the axioms of a strong comonad, we took the duals of the axioms for a strong monad.}, but we were not able to prove \cref{thm:compose}
within this axiomatization. Here is an example of a rather basic property, that does not seem to follow 
from this usual set of axioms:
\[\begin{tikzcd}
	{M X} && {(X\Rightarrow X) \times M X} && {M ( (X \Rightarrow X) \times X) } \\
	\\
	&&&& {M X}
	\arrow["id"{description}, from=1-1, to=3-5]
	\arrow["{\langle \mathtt{const_{id}}, id\rangle}", from=1-1, to=1-3]
	\arrow["{\mathtt{strength}}", from=1-3, to=1-5]
	\arrow["{M \mathtt{eval}}"{description}, from=1-5, to=3-5]
\end{tikzcd}\]
In the diagram $\mathtt{eval} : (Y \Rightarrow X) \times Y \to X$ denotes the function application 
(from the definition of a Cartesian closed category), and $\mathtt{const_{id}} : Z \to (X \Rightarrow X)$
denotes an arrow that maps every argument to the identity function
(formally, this is defined as $\mathtt{const_{id}} = \Lambda(\pi_2)$, where $\Lambda$ comes from the definition 
of a Cartesian closed category and $\pi_2 : Z \times X \to X$ is the second projection). 

Next, we tried adding this diagram as one of the axioms and proving \cref{thm:compose}. 
However we have encountered other problems. 
So, for the sake of simplicity, we have decided to restrict
the scope of this paper to \Set. However, we believe that it should be possible to find an axiomatization of
$\strength$ that would admit a proof of \cref{thm:compose}. Moreover, it is possible that 
such an axiomatization already exists in the literature and we were simply not able to find it.
We would welcome any suggestion of such an axiomatization. 

\section{Formalization in Coq}\label{subsec:coq}
In this section, we present the framework of our Coq formalization, focusing on key definitions and the statements of main lemmas.
To streamline our exposition, we exclude the formal proofs,
and certain auxiliary lemmas deemed peripheral to our core arguments.
The entire Coq file is available under the following link\footnote{
    \url{https://github.com/ravst/MonadsComonadsTransducersCoq}
}.

\subsection{Modelling}
In this section, we show how we have modelled our theory in Coq. 
We start by fixing the functor $M$:
\begin{lstlisting}[language=Coq]
Parameter M : Type -> Type.
Parameter mapM : forall {A} {B}, (A -> B) -> M A -> M B.
Notation " # f " := (mapM f).
\end{lstlisting}
Here \texttt{mapM} is the mapping on sets, and \texttt{mapM} is the mapping on functions.
We also introduce a notation for the mapping on sets, where $M f$ is written as $\mathtt{\#f}$.
Next, we assert the axioms of a functor:
\begin{lstlisting}[language=Coq]
Axiom mapCompose : forall {A} {B} {C} (f : B -> C) (g : A -> B) (mx : M A),
    (#f)((#g) mx) = (#(compose f g)) mx.

Axiom mapId : forall {A} (x : M A), (# id) x = x.
\end{lstlisting}
In a similar fashion, we assert the monad structure on $M$:
\begin{lstlisting}[language=Coq]
(*Flatten operation*)
Parameter mult : forall {A}, M (M A) -> M A. 
(*Singleton operation*)
Parameter unit : forall {A}, A -> M A.

(*Monad operations are natural*)
Axiom multNatural : forall A B, forall f : A -> B, forall x,
  mult((#(#f)) x) = (#f)(mult x).

Axiom unitNatural : forall A B, forall f : A -> B, forall x,
  unit(f x) = (#f)(unit x).

(*Monad operations satisfy monad axioms*)
Axiom multAx : forall A (x : M (M (M A))), 
  mult (mult x) = mult ((#mult) x).

Axiom multMapUnitAx : forall A (x : M A), 
  mult ((# unit) x) = x.

Axiom multUnitAx : forall A (x : M A),
  mult (unit x) = x.

\end{lstlisting}
Next, we assert the comonad structure on $M$:
\begin{lstlisting}[language=Coq]
(*Expand*)
Parameter coMult : forall {A}, M A -> (M (M A)).
(*Extract*)
Parameter coUnit : forall {A}, M A -> A.

(*Comonad operations are natural*)
Axiom coUnitNatural : forall A B, forall f : A -> B, forall x,
    f (coUnit x) = (coUnit ((#f) x)).

Axiom coMultNatural : forall A B, forall f : A -> B, forall m,
    coMult ((#f) m) = (#(#f)) (coMult m).

(*Comonad operations satisfy comonad axioms*)
Axiom coMultAx : forall A (x : M A),
    coMult (coMult x) = (#coMult) (coMult x).

Axiom coUnitCoMultAx : forall A (x : M A),
    coUnit (coMult x) = x.

Axiom mapCoUnitComultAx : forall A (x : M A),
    (#coUnit) (coMult x) = x.
\end{lstlisting}
Next, we introduce the $\putf$ operation, and assert that it is natural:
\begin{lstlisting}[language=Coq]
Parameter put : forall {A},  ((M A) * A) -> M A.

(*Put is natural*)
Axiom putNatural : forall {A} {B} (f : A -> B) (xs : M A) (x : A),
  (#f) (put (xs, x)) = put ((#f) xs, f x).
\end{lstlisting}
Then, we introduce the coherence axioms. In the following code, we use the notation 
$\mathtt{<\!| f,\ g|\!>}$ for the pairing of two functions $\langle f, g \rangle$, defined 
as $\langle f, g \rangle x = (f x, g x)$, and the notation $\mathtt{*}$ for function composition.
\begin{lstlisting}[language=Coq]
Axiom flattenExtract : forall A, forall (x : M (M A)),
    coUnit (mult x) = coUnit (coUnit x).

Axiom singletonExpand : forall A, forall (x : A), 
    coMult (unit x) = (#unit) (unit x).
  
Axiom singletonExtract : forall A, forall (x : A), 
    coUnit (unit x) = x.

Axiom getPut : forall A, forall (x : M A) (y : A),
    coUnit (put (x, y)) = y.

Axiom putGet : forall A, forall (x : M A), 
    put (x, coUnit x) = x.

Axiom putPut : forall {A} l (x : A) y, 
    put (put(l, x), y) = put (l, y).

Axiom putAssoc : forall A, forall (x : M (M A)) (y: M A) (z : A),
    put ((mult (put (x, y))), z) = mult (put (x, put (y, z))).

Axiom singletonPut : forall A (x : A) y, 
    put(unit x, y) = unit y. 

(*The Set-specific definition of strength*)
Definition str {X} {Y} (x : X * (M Y)) : (M (X*Y)) := 
match x with 
    (x1, x2) => (#(fun y => (x1, y))) x2
end.

Axiom flattenExpand : forall {A} (x : M (M A)),
    coMult (mult x) = 
    mult ( (#(# mult)) ( ((#(#put)) ((# str) ((#<| id, coMult * coUnit|>) (coMult x)))))).
\end{lstlisting}
Next, we define the properties of an algebra:
\begin{lstlisting}[language=Coq]
Definition associative {S} (alpha : M S -> S) : Prop := 
    forall l,  alpha ((#alpha) l) = alpha (mult l).
Definition unitInvariant {S} (alpha : M S -> S) : Prop := 
    forall s, alpha (unit s) = s.
\end{lstlisting}
Finally, we define an $M$-definable transduction:
\begin{lstlisting}[language=Coq]
Definition mTransduction {X Y S} (alpha : M S -> S) (h : X -> S) (lambda : S -> Y) : M X -> M Y := 
    (#lambda) * (#alpha) * coMult * (#h).
\end{lstlisting}

\subsection{The composition theorem}\label{subsec:composition-theorem-coq}
In this section, we present the formal statement of the composition theorem. 
We start with the context: two $M$-definable transduction $F : M X \to M Y$ 
and $G : M Y \to  M Z$:
\begin{lstlisting}[language=Coq]
(*We are given three alphabets *)
Variable X : Set.
Variable Y : Set.
Variable Z : Set.

(*We are given and M-transduction F :  M X -> M Y*)
Variable S1 : Set.
Variable prod1 : M S1 -> S1.
Variable h1 : X -> S1.
Variable lambda1 : S1 -> Y.
Axiom assoc1 : associative prod1.
Axiom unitInvariant : unitInvariant prod1.
Definition F := mTransduction prod1 h1 lambda1.

(*And we are given an M-transduction G : MY -> M Z*)
Variable S2 : Set.
Variable prod2 : M S2 -> S2.
Variable h2 : Y -> S2.
Variable lambda2 : S2 -> Z.
Axiom assoc2 : associative prod2.
Axiom unitInvariant2 : unitInvariant prod2.
Definition G := mTransduction prod2 h2 lambda2.
\end{lstlisting}
Next, we define the generalized wreath product of $S1$ and $S2$, 
and use it to define a new $M$-definable transduction $GF : M X \to M Z$:
\begin{lstlisting}[language=Coq]
Definition S3 : Type := S1 * ((S1 -> S1) -> S2).

Definition prod3 (l : M S3) : S3 :=
  let ctx1 (l : M S1) (x : S1) : S1 := prod1 (put (l, x)) in 
  let tmp1 (l : M S3) : ((S1 -> S1) -> S2) := proj2 (coUnit l) in 
  let tmp2 (c : S1 -> S1) (l : M S3) : (S1 -> S1) := c * (ctx1 ((#proj1)(l))) in 
  (prod1 ((#proj1) (l)), fun c => prod2 ( (#app2) (((#<|tmp1, tmp2 c |>) (coMult l))))).

Definition h3 (x : X) : S3 := 
  (h1 x, fun c => h2 (lambda1 (c (h1 x)))).

Definition lambda3 (s : S3) : Z :=
  match s with 
    | (_, f) => lambda2 (f (fun a => a))
  end.
Definition GF := mTransduction prod3 h3 lambda3.
\end{lstlisting}

Next, we prove that $GF$ is equal to $G \circ F$. 
(As mentioned before, we omit the proof in this paper, but it is available in the Coq file.)
\begin{lstlisting}[language=Coq]
Theorem compositionCorrect : GF = G * F.
\end{lstlisting}
Finally, we prove that $S3$ is a valid algebra (again, we omit the proofs):
\begin{lstlisting}[language=Coq]
Theorem S3Associative : associative prod3.
Theorem S3UnitInvariant : unitInvariant prod3.
\end{lstlisting}

\subsection{Contexts}\label{subsec:coq-contexts}
In this section, we present the formalization of the results from \cref{subsec:ctx}. 
We start with an algebra $S$: 
\begin{lstlisting}[language=Coq]
Variable S : Set.
Variable prod : M S -> S. 
Axiom prodAssoc : associative prod.
Axiom prodUnit : unitInvariant prod.
\end{lstlisting}
We define contexts: 
\begin{lstlisting}[language=Coq]
Definition ctx (l : M S) (x : S) : S := prod (put (l, x)).
\end{lstlisting}
And we prove the required lemmas, respectively Lemmas~\ref{lem:ctx-put}, \ref{lem:ctx-id}, and \ref{lem:contexts-concat}.
Here $\mathtt{<\!* f,\ g*\!>}$ denotes the function $f \times g$, defined as $(f \times g)(x, y) = (f x,\ g y)$.
\begin{lstlisting}[language=Coq]
Lemma ctxPutInvariant : forall l a, ctx l = ctx (put (l, a)).

Lemma ctxUnitId : forall x, ctx (unit x) = id.

Definition concat : M S * M S -> M S := 
  mult * put * <* (#unit), id *>.

Lemma concatCtx : forall (v w : M S), 
  ctx v * ctx w = ctx (concat (v, w)).
\end{lstlisting}
\subsection{Flatten-expand axiom}\label{subsec:flatten-expand-bialg-coq}
Finally, we formalize the equivalence of the flatten-expand axiom and the bialgebraic formulation
(see \cref{subsec:flatten-expand-as-bialgebras}).
The left-hand side of the equality is the top path in the flatten-expand axiom,
and the right-hand side is the top path in the bialgebraic formulation:
\begin{lstlisting}[language=Coq]
Theorem flattenExpandAltThm : forall A, forall (x : M (M A)), 
mult ( (#(# mult)) ( ((#(#put)) ((# str) ((#<| id, coMult * coUnit|>) (coMult x)))))) = 
((#mult) * mult * (#(# put)) * (# str) * (#<| #coUnit, coUnit|>) * coMult * (#coMult)) x.
\end{lstlisting}

\end{document}